\documentclass[12pt, a4paper, reqno]{amsart}
\usepackage{mathrsfs}
\usepackage{bbm}
\usepackage{diagbox}
\usepackage{multirow}
\usepackage{pgf,tikz}
\usepackage{tabu}
\usepackage{amsmath,amscd,amssymb,latexsym}
\usepackage[all]{xy}

\input xypic

\textwidth=152mm \oddsidemargin=4mm
\evensidemargin=\oddsidemargin
\textheight=238mm \advance\voffset-20mm \headheight=10mm \headsep=8mm


\DeclareMathVersion{can}
\DeclareMathAlphabet{\can}{OT1}{cmss}{m}{n}
\vfuzz2pt 
\hfuzz2pt 
\newtheorem{thm}{Theorem}[section]
\newtheorem{cor}[thm]{Corollary}
\newtheorem{lemma}[thm]{Lemma}
\newtheorem{prop}[thm]{Proposition}
\newtheorem{rem}[thm]{Remark}
\newtheorem{example}[thm]{Example}
\theoremstyle{definition}

\theoremstyle{fact}

\theoremstyle{conjecture}

\numberwithin{equation}{section}


\begin{document}


\title[quaternary codes and related subfield codes]{quaternary linear  codes and related  binary subfield codes}


\author[Y. Wu]{Yansheng Wu}
\address[Y. Wu]{School of Computer Science, Nanjing University of Posts and Telecommunications, Nanjing 210023, P. R. China;  Shanghai Key Laboratory of Trustworthy Computing, East China Normal University, Shanghai, 200062, P. R. China}
\email{yanshengwu@njupt.edu.cn}

\author[C. Li]{Chengju Li}
\address[C. Li]{ Shanghai Key Laboratory of Trustworthy Computing, East China Normal University, Shanghai, 200062, P. R. China}
\email{cjli@sei.ecnu.edu.cn}

\author[F. Xiao]{Fu Xiao}
\address[F. Xiao]{School of Computer Science, Nanjing University of Posts and Telecommunications, Nanjing 210023, P. R. China}
\email{xiaof@njupt.edu.cn}


\subjclass[2010]{11T71, 06A11, 94B05}

\keywords{quaternary linear code, subfield code,  weight distribution, simplicial complex}

\thanks{{\tiny   }}

\begin{abstract} In this paper,  we mainly study quaternary linear codes and their binary subfield codes. First we  obtain a general  explicit relationship between quaternary linear codes and their binary subfield codes in terms of generator matrices  and defining sets. Second,  we construct quaternary linear codes via simplicial complexes and  determine the weight distributions of these codes. Third, the weight distributions of the binary subfield codes of these quaternary codes are also computed by employing the general characterization.  Furthermore, we present two infinite families  of  optimal linear  codes with respect to the Griesmer Bound, and a class of binary  almost optimal  codes with respect to the Sphere Packing Bound. We also need to emphasize that we obtain at least 9 new quaternary linear codes. 

\end{abstract}

\maketitle

\section{Introduction}

Let $\Bbb F_{q^{m}}$ be the ﬁnite ﬁeld with $q^{m}$ elements, where $q$ is a power of a prime and $m$ is a positive integer. Given an $[ n, k ]$ linear code $\mathcal{C}$ over $\Bbb F_{q^{m}}$, Ding and Heng \cite{DH} recently constructed a new  linear code $\mathcal{C}^{(q)}$ over  $\Bbb F_q$ with respect to $\mathcal C$,   which is called a subfield code.  In the paper, the authors mainly developed the general theory of subfield codes, investigated subfield codes of two families of ovoid codes, and presented some new and optimal subfield codes. After that,   there have been literature  on  subfield codes of combinatorial codes (\cite{HD1,HDW}); MDS codes (\cite{HD2, TWD}), and some other linear codes (\cite{HWD,WZ, WZZ, XY, ZWL}). We record a table here (Table 1) for the convenience of the reader. In the table, we list some optimal linear subfield codes  with respect to the Griesmer Bound or  the Sphere Packing Bound.

Based on the generic  construction for linear codes, Hyun {\em et al}. \cite{HLL} constructed some infinite families of  binary optimal  linear codes  by choosing the defining set  as the complement of some  simplicial complexes.  A more general situation was considered by Hyun {\em et al}. \cite{HKWY} by using posets, and they presented some optimal and minimal binary linear
 codes not satisfying the condition of Ashikhmin-Barg \cite{AB}.    Recently, Zhu and Wei \cite{ZW}  constructed quaternary linear codes via simplicial complexes and presented an infinite family of  minimal  optimal quaternary linear  codes with respect to the Griesmer bound.
 
 \begin{table}[h]
\caption{Optimal linear subfield codes }
\begin{center}
\begin{tabu} to 1\textwidth{|X[1,c]|X[0.6,c]| X[3,c]|X[0.7,c]|X[1.5,c]|X[1.2,c]|}
\hline
\tiny{Reference} &\tiny{$q$-Ary}&\tiny{$[n,k,d]$ Code}& \tiny{\#Weight}&\tiny{ Bound}& \tiny{Result}\\
\hline
\tiny{[4]}&\tiny{$q$-ary}&\tiny{$[q^{2}+1, 4, q^{2}-q]$}&\tiny{2}&\tiny{Griesmer bound} &\tiny{ Thm.1.1}\\
\hline

\multirow{2}*{\tiny{[7]}}&\tiny{binary}&\tiny{$[2^{m}+2,2^{m}-m,4]$}&\tiny{}&\tiny{Sphere Packing} &\tiny{ Thm.11}\\
\cline{2-6}
&\tiny{$p$-ary}&\tiny{$[p^{m}+1, p^{m}-2m,4]$}&\tiny{}&\tiny{Sphere Packing} &\tiny{ Thm.16}\\
\hline
\multirow{4}*{\tiny{[8]}}&\tiny{binary}&\tiny{$[2^{m}+1,2^{m}-m,3]$}&\tiny{}&\tiny{Sphere Packing} &\tiny{ Thm.VII. 4}\\
\cline{2-6}
&\multirow{3}*{\tiny{$p$-ary}}&\tiny{$[p^{m}+1, p^{m}+1-2m,3]$}&\tiny{}&\tiny{Sphere Packing} &\tiny{ Thm.VI.7}\\
\cline{3-6}
&&\tiny{$[p^{m}+1, p^{m}-m,3]$}&\tiny{}&\tiny{Sphere Packing} &\multirow{2}*{\tiny{ Thm.V.1}}\\
\cline{3-5}
&&\tiny{$[p^{m}+1, m+1, p^{m-1}(p-1)]$}&\tiny{3}&\tiny{Griesmer bound} &\\
\hline

\multirow{2}*{\tiny{[9]}}&\multirow{2}*{\tiny{binary}}&\tiny{$[2^{m+s}+2^{s}-2^{m},2^{m+s}+2^{s}-2^{m}-2m-2-m,4]$}&\tiny{}&\tiny{Sphere Packing} &\tiny{ Thm.12}\\
\cline{3-6}
&&\tiny{$[2^{m}+2, 2^{m}-2m,6]$}&\tiny{}&\tiny{Sphere Packing} &\tiny{ Thm.18}\\
\hline

\multirow{4}*{\tiny{[10]}}&\multirow{4}*{\tiny{$q$-ary}}&\tiny{$[q^{2}-1,4,q^{2}-q-2]$}&\tiny{5}&\tiny{Griesmer bound} &\multirow{2}*{\tiny{ Thm.6}}\\
\cline{3-5}
&&\tiny{$[q^{2}-1,q^{2}-5,4]$}&\tiny{}&\tiny{Sphere Packing} &\\
\cline{3-6}
&&\tiny{$[q^{2}, 4,q^{2}-q-1]$}&\tiny{4}&\tiny{Griesmer bound} &\multirow{2}*{\tiny{ Thm.9}}\\
\cline{3-5}
&&\tiny{$[q^{2}, q^{2}-4,4]$}&\tiny{}&\tiny{Sphere Packing} &\\
\hline

\multirow{4}*{\tiny{[21]}}&\multirow{4}*{\tiny{binary}}&\tiny{$[2^{2m}+2, 2m+1, 2^{2m-1}]$}&\tiny{4}&\tiny{Griesmer bound} &\multirow{2}*{\tiny{ Thm.1}}\\
\cline{3-5}
&&\tiny{$[2^{2m}+2, 2^{2m}-2m+1, 3]$}&\tiny{}&\tiny{Sphere Packing} &\\
\cline{3-6}
&&\tiny{$[2^{2m}+1, 2m+1, 2^{2m-1}]$}&\tiny{3}&\tiny{Griesmer bound} &\multirow{2}*{\tiny{ Thm.2}}\\
\cline{3-5}
&&\tiny{$[2^{2m}+1, 2^{2m}-2m, 3]$}&\tiny{}&\tiny{Sphere Packing} &\\
\hline
\end{tabu}
\end{center}
\end{table}

From the argument of Xiang in \cite{X} and the generic  construction for linear codes, each  linear code  can be   expressed as the defining code $\mathcal{C}_{D}$  with a defining set $D$.   Due  to the key role in estimating the error-correcting capability of  codes, weight distributions is  an  important research topic in coding theory. Motivated by the above work, we ask the following questions:
 
{\it 1)  What is the relationship between  quaternary linear codes  and their binary subfield codes ? Namely for the generic construction of linear codes, what  is the relationship between defining sets of   quaternary linear  codes and the binary subfield codes ?
 
 2)  Can we obtain more optimal quaternary linear codes and determine their weight distributions ?
 
 3) Can we obtain more optimal the binary subfield  linear codes of quaternary linear codes and determine their weight distributions ?
 }
 
 The basic questions above are the major motivation of this paper. First, we find a direct approach to deal with the relationship between  quaternary linear codes  and their binary subfield codes.  Second, we will follow the idea in \cite{HLL}, and use simplicial complexes to construct  quaternary linear codes. Weight distributions of  these quaternary codes are  determined when those simplicial complexes are generated by one or two maximal elements.   Third, we  also compute  weight distributions of the binary subfields code
 of these quaternary codes.   In addition, we present several classes of optimal and almost optimal linear  codes and some examples of linear codes with optimal parameters. {\color{blue} By Magma,  we obtain at least 9 new quaternary linear codes. 
 }
 
 The rest of this paper is organized as follows. In Section 2, we  recall some bounds on linear codes, some  concepts of  simplicial complexes, and generating functions. In Section 3,  we will deal with the relationship between quaternary linear codes and their binary subfield codes.  We will compute the weight distributions of some quaternary codes and their binary subfield codes in Sections 4 and  5. Furthermore, we obtain two classes of optimal linear codes and a class of almost optimal binary codes. In Section 6, we  present  some new quaternary linear codes and  conclude the paper.





\section{Preliminaries}

 \subsection{Two bounds of linear codes}
 
 \

Let $\mathcal{C}$ be an $[n,k,d]$ linear code over $\Bbb F_q$.
Assume that there are $A_i$ codewords in $\mathcal C$ with Hamming weight $i$ for $1\le i \le n$.
Then $\mathcal C$ has weight distribution $(1, A_1, \ldots, A_n)$ and  weight enumerator $1+A_1z+\cdots +A_nz^n$.
 Moreover, if the number of nonzero $A_{i}$'s in the sequence $(A_1, \ldots, A_n)$ is exactly equal to $t$,
 then the code is called {\it $t$-weight}. The $[n,k,d]$ code  $\mathcal{C}$ is called {\it distance optimal} if there is no $[n,k,d+1]$ code (that is, this code has the largest minimum distance for given length $n$ and dimension $k$),
 and it is called {\it almost optimal} if an $[n, k, d + 1]$ code is  distance optimal (refer to~\cite[Chapter 2]{HP}). 
 
 Next we recall two well-known bounds on linear codes.
 \begin{lemma}{\rm  (Griesmer Bound \cite{G})   For a given  $[n, k, d]$  linear code over $\Bbb F_q$,  there is a bound as follows:
 $$ \sum_{i=0}^{k-1}\bigg\lceil {\frac{d}{q^i}} \bigg\rceil \le n,  $$ where $\lceil {\cdot} \rceil$ is the ceiling function. 
 }
 \end{lemma}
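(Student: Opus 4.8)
The plan is to prove the Griesmer Bound by induction on the dimension $k$, using the classical device of the \emph{residual code}. The base case $k=1$ is immediate: the asserted inequality reduces to $\lceil d\rceil = d \le n$, which holds for any nonzero code. For the inductive step I would start from an $[n,k,d]$ code $\mathcal{C}$ with $k\ge 2$, pick a codeword $c$ of minimum weight $d$, and — after applying a monomial transformation, which alters neither $n$, $k$, $d$ nor the weight distribution — assume that the $d$ nonzero coordinates of $c$ sit in the first $d$ positions and that each of these entries equals $1$. Puncturing every codeword on the support of $c$ then yields the residual code $\mathcal{C}' = \operatorname{Res}(\mathcal{C},c) \subseteq \mathbb{F}_q^{\,n-d}$, and it suffices to pin down its parameters and invoke the induction hypothesis.

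Two facts about $\mathcal{C}'$ carry the argument. First, $\dim \mathcal{C}' = k-1$: the kernel of the puncturing map consists of codewords supported inside the first $d$ coordinates, and any such codeword which is not a scalar multiple of $c$ would vanish in some position $j\le d$ and hence have weight at most $d-1$, contradicting minimality; so the kernel is exactly $\langle c\rangle$, and the image has dimension $k-1$. Second — and this is the heart of the matter — the minimum distance $d'$ of $\mathcal{C}'$ satisfies $d' \ge \lceil d/q\rceil$. To see this I would take any codeword $x=(x',x'')$ of $\mathcal{C}$ with $x''\ne 0$ (so $x''$ is a nonzero codeword of $\mathcal{C}'$) and average the head weights over the coset $x+\langle c\rangle$: a direct count gives $\sum_{\lambda\in\mathbb{F}_q}\operatorname{wt}(x'-\lambda\mathbf{1}) = d(q-1)$, so some $\lambda_0$ achieves $\operatorname{wt}(x'-\lambda_0\mathbf{1}) \le d(q-1)/q$. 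Since $x-\lambda_0 c$ is nonzero (its tail is $x''\ne 0$), its total weight is at least $d$, whence $\operatorname{wt}(x'') \ge d - d(q-1)/q = d/q$, and therefore $\operatorname{wt}(x'')\ge\lceil d/q\rceil$ because weights are integers.

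With these in hand, the induction hypothesis applied to the $[\,n-d,\ k-1,\ d'\,]$ code $\mathcal{C}'$ — using monotonicity of the ceiling function to replace $d'$ by the possibly smaller quantity $\lceil d/q\rceil$ — gives $n-d \ge \sum_{i=0}^{k-2}\bigl\lceil \lceil d/q\rceil/q^i\bigr\rceil$. Invoking the elementary identity $\lceil\lceil a/b\rceil/c\rceil = \lceil a/(bc)\rceil$ for positive integers, the right-hand side collapses to $\sum_{i=0}^{k-2}\lceil d/q^{i+1}\rceil = \sum_{j=1}^{k-1}\lceil d/q^{j}\rceil$; adding $d = \lceil d/q^{0}\rceil$ to both sides yields $n \ge \sum_{j=0}^{k-1}\lceil d/q^{j}\rceil$, completing the induction. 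I expect the only genuinely delicate step to be the bound $d'\ge\lceil d/q\rceil$ on the residual code: this is where the field size $q$ actually enters, and it needs both the normalization of $c$ to an all-ones vector on its support and the averaging argument over the coset $x+\langle c\rangle$; the dimension count and the arithmetic with ceilings are routine bookkeeping by comparison.
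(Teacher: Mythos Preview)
The paper does not prove this lemma at all: it is stated as a citation to Griesmer's original paper and to the textbook literature, with no argument given. So there is no ``paper's own proof'' to compare against.

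Your proposal is the standard residual-code proof of the Griesmer bound and is essentially correct. One small slip in the dimension count deserves a fix: you assert that any codeword in the kernel of the puncturing map which is not a scalar multiple of $c$ ``would vanish in some position $j\le d$'', but that need not be true as written --- over $\mathbb{F}_5$, say, with $c=(1,1,1)$ the vector $x=(1,2,3)$ lies in the span of the first three coordinates, is not a multiple of $c$, yet has full support. The intended argument is one step longer: pick any coordinate $j\le d$ and replace $x$ by $x-x_jc$, which is still in the kernel, is nonzero (since $x\notin\langle c\rangle$), and now has a zero in position $j$, hence weight at most $d-1$. With that adjustment the kernel is exactly $\langle c\rangle$ and the rest of your argument --- the averaging bound $d'\ge\lceil d/q\rceil$, the nested-ceiling identity, and the induction --- goes through cleanly.
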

 We say that a linear code is a Griesmer code if it meets the Griesmer bound with equality. One can verify that Griesmer codes are distance optimal.
 
  \begin{lemma}{\rm  (Sphere Packing Bound \cite{HP})   For a given  $[n, k, d]$  linear code over $\Bbb F_q$,  there is a bound as follows:
 $$\sum_{i=0}^{\left \lfloor \frac{d-1}{2} \right \rfloor} \dbinom{n}{i} (q-1)^{i}\le q^{n-k}, $$
  where $ \left \lfloor {\cdot}  \right \rfloor$ is the floor function. 
  
 }
 \end{lemma}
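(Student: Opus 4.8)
The plan is to run the classical volume (sphere-packing) argument. Write $t=\left\lfloor\frac{d-1}{2}\right\rfloor$ and, for $\mathbf{x}\in\Bbb F_q^n$, let $B(\mathbf{x},t)=\{\mathbf{y}\in\Bbb F_q^n:\ d(\mathbf{x},\mathbf{y})\le t\}$ be the Hamming ball of radius $t$ centered at $\mathbf{x}$. The first step is the combinatorial count $|B(\mathbf{x},t)|=\sum_{i=0}^{t}\binom{n}{i}(q-1)^i$: a vector lying at Hamming distance exactly $i$ from $\mathbf{x}$ is obtained by selecting the $i$ coordinates in which it differs from $\mathbf{x}$ (there are $\binom{n}{i}$ choices) and then one of the $q-1$ values distinct from the corresponding entry of $\mathbf{x}$ in each of those coordinates. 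In particular this count does not depend on the choice of $\mathbf{x}$.

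The second step is to verify that the balls $B(\mathbf{c},t)$, as $\mathbf{c}$ ranges over the $q^k$ codewords of $\mathcal{C}$, are pairwise disjoint. Since the Hamming distance is a genuine metric, in particular it satisfies the triangle inequality, so if there were some $\mathbf{y}\in B(\mathbf{c}_1,t)\cap B(\mathbf{c}_2,t)$ with $\mathbf{c}_1\neq\mathbf{c}_2$ codewords, then $d(\mathbf{c}_1,\mathbf{c}_2)\le d(\mathbf{c}_1,\mathbf{y})+d(\mathbf{y},\mathbf{c}_2)\le 2t\le d-1<d$, contradicting the fact that the minimum distance of $\mathcal{C}$ is $d$.

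The final step is to compare cardinalities. The disjoint union $\bigsqcup_{\mathbf{c}\in\mathcal{C}}B(\mathbf{c},t)$ is contained in $\Bbb F_q^n$, so
$$q^k\sum_{i=0}^{t}\binom{n}{i}(q-1)^i=\sum_{\mathbf{c}\in\mathcal{C}}\bigl|B(\mathbf{c},t)\bigr|=\Bigl|\bigsqcup_{\mathbf{c}\in\mathcal{C}}B(\mathbf{c},t)\Bigr|\le\bigl|\Bbb F_q^n\bigr|=q^n.$$
Dividing both sides by $q^k$ gives $\sum_{i=0}^{\lfloor (d-1)/2\rfloor}\binom{n}{i}(q-1)^i\le q^{n-k}$, which is exactly the asserted inequality. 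There is essentially no obstacle here; the only points that require genuine care are checking that the Hamming distance is a metric (so that the triangle inequality invoked in the disjointness step is legitimate) and the elementary count of the sphere sizes, both of which are entirely standard.
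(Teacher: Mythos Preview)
Your argument is correct and is precisely the standard volume/packing proof of the Sphere Packing Bound. The paper itself does not supply a proof of this lemma at all---it simply quotes the bound from \cite{HP}---so there is nothing to compare against; your write-up would serve perfectly well as a self-contained justification.
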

 
 When we have a code for which equality in the above bound is ture, the code is called {\it perfect}. One can verify that perfect codes are also distance optimal.
 
 \subsection{The generic construction of linear codes}
 
 \

 Let $m$ be a positive integer, $q$ be a {\color{blue} prime power}, and $(V_m,\cdot)$ be an $m$-dimensional vector space over $\Bbb F_q$, where $\cdot$ denotes an  inner product on $V_m$. For a linear code of length $n$ over $\Bbb F_q$, there is a generic construction as follows:
\begin{equation}\label{eq2.1}
\mathcal C_D=\{(x\cdot d_1, x\cdot d_2, \ldots, x\cdot d_n): x\in V_m\}\end{equation}
where $D=\{d_1, \ldots, d_n\} \subseteq V_m$. The set $D$ is called the defining set of the code $\mathcal{C}_{D}.$ If the set $D$ is properly chosen, the code $\mathcal{C}_{D}$ may have good parameters. The following two situations are common:

(1) When $V_m=\Bbb F_{q^m}$, $x\cdot y=\mbox{Tr}_{q^{m}/q}(xy)$ for $x,y\in \Bbb F_{q^m}$ and $\mbox{Tr}_{q^{m}/q}$ is the  trace function from $\Bbb F_{q^m}$ to $\Bbb F_{q}$. In this case, the corresponding code $\mathcal C_D$ in  Equation \eqref{eq2.1} is called a trace code over $\Bbb F_q$. This generic construction was first introduced by Ding {\em et al}. \cite{D1}. 

(2) When $V_m=\Bbb F_q^m$, ${\bf  x}\cdot {\bf  y}=\sum_{i=1}^mx_iy_i$ for ${\bf  x}=(x_1, \ldots, x_m), {\bf  y}=(y_1, \ldots, y_m)\in \Bbb F_q^m.$  This standard construction in  Equation \eqref{eq2.1} can be also found in  \cite{HP}.  

\subsection{Simplicial complexes and generating functions}$~$
\

Let $\Bbb F_2$ be the finite field with  two elements. Assume that $m$ is a positive integer. The support $\mathrm{supp}({\bf v})$ of a vector ${\bf v} \in \Bbb F_2^m$ is defined by the set of nonzero coordinates. The Hamming weight $wt({\bf v})$ of ${\bf v}\in \mathbb{F}^m_2$ is defined by the size of $\mathrm{supp}({\bf v})$.  For two subsets $A, B\subseteq [m]$, the set $\{x: x\in A\mbox{ and } x\notin B\}$ and the number of elements in the set $A$ are denoted by
$A\backslash B$ and $|A|$, respectively.

For two vectors ${\bf u,v}\in \mathbb{F}_2^m$, we say ${\bf v}\subseteq {\bf u }$ if $\mathrm{supp}({\bf v})\subseteq \mathrm{supp}({\bf u})$.  We say that a family $\Delta \subseteq \mathbb{F}_2^m $ is a {\it simplicial complex} if ${\bf  u}\in \Delta$ and ${\bf v}\subseteq {\bf u}$ imply ${\bf v}\in \Delta$. For a simplicial complex $\Delta$, a maximal element of $\Delta $ is one that is not properly contained in any other element of $\Delta$. Let $\mathcal{F}=\{F_1, \ldots, F_l\}$ be the family of maximal elements of $\Delta$. For each $F\subseteq [m]$, the simplicial complex $\Delta_F$ generated by $F$ is defined to be the family of all subsets of $F$.

Let $X$ be a subset of $\mathbb{F}_2^m$. Hyun {\em et al}. \cite{CH} introduced the following $m$-variable generating function associated with the set $X$:
$$\mathcal{H}_{X}(x_1,x_2,\ldots, x_m)=\sum_{\mathbf u\in X}\prod_{i=1}^mx_i^{u_i}\in \mathbb{Z}[x_1,x_2, \ldots, x_m],
$$
where ${\bf u}=(u_1,u_2,\ldots, u_m)\in \mathbb{F}_2^m$ and $\mathbb{Z}$ is the ring of integers. 



The following lemma plays an important role in determining the weight distributions of the quaternary codes defined in Equation  \eqref{eq2.1}.

\begin{lemma}\cite[Theorem 1]{CH}  \label{lem2.1} {\rm 
Let $\Delta$ be a simplicial complex of $\mathbb{F}_2^m$ with the set of maximal elements $\mathcal {F}$. Then
 \begin{align*}
 \mathcal{H}_{\Delta}(x_1,x_2,\ldots, x_m)=\sum_{\emptyset\neq S\subseteq \mathcal{F}}(-1)^{|S|+1}\prod_{i\in \cap S}(1+x_i),
 \end{align*}
 where $\cap S$ denotes the intersection of all elements in $S$.    
In particular, we also have $$|\Delta|=\sum_{\emptyset\neq S\subseteq \mathcal{F}}(-1)^{|S|+1}2^{|\cap S|}.$$ }
\end{lemma}

There is a bijection between $\mathbb{F}_2^m$ and $2^{[m]}$ being the power set of $[m]=\{1, \ldots, m\}$, defined by ${\bf v}\mapsto$ supp$({\bf v})$.  Throughout this paper, we will identify a vector in $\mathbb{F}_2^m$  with its support. 

\begin{example} {\rm Let $\Delta$ be a simplicial complex of $\mathbb{F}_2^4$ with the set of maximal elements $\mathcal {F}=\{(1,1,0,0), (0,1,1,0),(0,0,1,1)\}$.  Then 
\begin{align*}
 \mathcal{H}_{\Delta}(x_1,x_2,
 x_3, x_4)=\prod_{i\in \{1,2\}}(1+x_i)+\prod_{i\in \{3,4\}}(1+x_i)+\prod_{i\in \{2,3\}}(1+x_i)-(1+x_{2})-(1+x_{3})\\
 =1+x_1+x_2+x_3+x_4+x_1x_2+x_2x_3+x_3x_4.
 \end{align*}
 and $|\Delta|=8.$
}
\end{example}

\section{Relationship between quaternary codes and the subfield codes}


For the finite field $\Bbb F_4$, as we known $\Bbb F_4\cong \Bbb F_2[x]/\langle x^2+x+1\rangle$, where $x^2+x+1$ is the only irreducible polynomial of degree two in $\Bbb F_2[x]$. Let $w$ be an element in some extend field of $\Bbb F_2$ such that $w^2+w+1=0$. Then $\Bbb F_4=\Bbb F_2(w)$ and for each $u \in \mathbb F_4$ there is a unique representation $u =
a + wb$, where $a, b \in \Bbb F_2$. Let $m$ be a positive
integer, and $\mathbb F_4^m$
be the set of $m$-tuples over $\mathbb F_4$.  Any vector $\mathbf x \in  \mathbb F_4^m$
can be written as $\mathbf x =\mathbf a + w\mathbf b$, where
$\mathbf a,\mathbf b\in\mathbb F_2^m $.

From the argument of Xiang in \cite{X}, any quaternary  linear code of length $n$ can be  also expressed as the code $\mathcal{C}_{D}$ in Equation \eqref{eq2.1}, where $D = \{ \mathbf d_{1}, \mathbf d_{2} , \ldots , \mathbf d_{n} \} \subseteq \Bbb F_{4}^{m} $ and $m$ is some positive integer. 

The following result plays an important role in the research of the subfield codes. 

\begin{lemma} \cite[Theorem 2.4]{DH} \label{lem3.1}
{\rm Let $\mathcal C$ be an $[n,k]$ linear code over $\mathbb{F}_{q^m}$ with generator matrix 
$$G=\left(\begin{array}{cclc}  g_{11} & g_{12} &\ldots  & g_{1n}\\  g_{21} & g_{22} &\ldots &  g_{2n}\\ \vdots &\vdots &\ddots&\vdots \\  g_{k1} & g_{k3} &\ldots &  g_{kn}\end{array}\right).$$
Let $\{\alpha_{1}, \alpha_{2}, \ldots, \alpha_{m}\}$ be a basis of $\mathbb{F}_{q^m}$ over $\mathbb{F}_{q}$. Then the subfield code $\mathcal C^{(q)}$ with respect to $\mathcal C$ has a generator matrix 
$$G^{(q)}=\left(\begin{array}{cclc} G_{1}^{(q)}\\ 
G_{2}^{(q)} \\ 
\vdots  \\ 
G_{k}^{(q)} \end{array}\right),$$
where each $G_{i}^{(q)}$ is defined as $$\left(\begin{array}{cclc}\mbox{Tr}_{q^{m}/q}( g_{i1} \alpha_{1})&\mbox{Tr}_{q^{m}/q}(  g_{i2} \alpha_{1}) &\ldots  &\mbox{Tr}_{q^{m}/q}(  g_{in} \alpha_{1})\\ \mbox{Tr}_{q^{m}/q}(  g_{i1} \alpha_{2}) &\mbox{Tr}_{q^{m}/q}(  g_{i2} \alpha_{2}) &\ldots & \mbox{Tr}_{q^{m}/q}( g_{in} \alpha_{2}) \\ \vdots &\vdots &\ddots&\vdots \\ \mbox{Tr}_{q^{m}/q}( g_{i1} \alpha_{m})  &\mbox{Tr}_{q^{m}/q}(  g_{i2} \alpha_{m})  &\ldots & \mbox{Tr}_{q^{m}/q}( g_{in} \alpha_{m}) \end{array}\right).$$

}
\end{lemma}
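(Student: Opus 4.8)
The plan is to reduce the statement to the standard fact that the Ding--Heng subfield code is the trace code. Recall that $\mathcal C^{(q)}$ is the $\mathbb F_q$-linear code
$$\mathcal C^{(q)}=\big\{\big(\mathrm{Tr}_{q^m/q}(c_1),\ldots,\mathrm{Tr}_{q^m/q}(c_n)\big):(c_1,\ldots,c_n)\in\mathcal C\big\}$$
(the trace being $\mathbb F_q$-linear, this is indeed a code over $\mathbb F_q$); equivalently, by Delsarte's theorem, $(\mathcal C^{(q)})^{\perp}$ is the subfield subcode $(\mathcal C^{\perp})|_{\mathbb F_q}$. Granting this description, it only remains to rewrite this trace code as the $\mathbb F_q$-row space of the displayed matrix $G^{(q)}$ in terms of the generator matrix $G=(g_{ij})$ of $\mathcal C$ and the chosen basis $\{\alpha_1,\ldots,\alpha_m\}$.

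The core is then a direct computation. Write $\mathbf g_1,\ldots,\mathbf g_k$ for the rows of $G$, so that a codeword of $\mathcal C$ is $\mathbf c=\sum_{i=1}^k a_i\mathbf g_i$ with $a_i\in\mathbb F_{q^m}$ and $j$-th coordinate $c_j=\sum_{i=1}^k a_ig_{ij}$. Since $\{\alpha_1,\ldots,\alpha_m\}$ is an $\mathbb F_q$-basis of $\mathbb F_{q^m}$, writing $a_i=\sum_{l=1}^m\lambda_{il}\alpha_l$ with $\lambda_{il}\in\mathbb F_q$ sets up an $\mathbb F_q$-linear bijection $\mathbb F_{q^m}^k\to\mathbb F_q^{km}$, $(a_1,\ldots,a_k)\mapsto(\lambda_{il})_{i\in[k],\,l\in[m]}$. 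By $\mathbb F_q$-linearity of $\mathrm{Tr}_{q^m/q}$,
$$\mathrm{Tr}_{q^m/q}(c_j)=\mathrm{Tr}_{q^m/q}\Big(\sum_{i=1}^k\Big(\sum_{l=1}^m\lambda_{il}\alpha_l\Big)g_{ij}\Big)=\sum_{i=1}^k\sum_{l=1}^m\lambda_{il}\,\mathrm{Tr}_{q^m/q}(g_{ij}\alpha_l),$$
and the right-hand side is precisely the $j$-th entry of the $\mathbb F_q$-linear combination $\sum_{i,l}\lambda_{il}\cdot(\text{$(i,l)$-th row of }G^{(q)})$, the $(i,l)$-th row of $G^{(q)}$ being $\big(\mathrm{Tr}_{q^m/q}(g_{i1}\alpha_l),\ldots,\mathrm{Tr}_{q^m/q}(g_{in}\alpha_l)\big)$, i.e.\ the $l$-th row of the block $G_i^{(q)}$. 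Since by the above bijection $(\lambda_{il})$ ranges over all of $\mathbb F_q^{km}$ exactly when $\mathbf c$ ranges over all of $\mathcal C$, the $\mathbb F_q$-row space of $G^{(q)}$ equals $\{(\mathrm{Tr}_{q^m/q}(c_1),\ldots,\mathrm{Tr}_{q^m/q}(c_n)):\mathbf c\in\mathcal C\}=\mathcal C^{(q)}$, so $G^{(q)}$ is a generator matrix of $\mathcal C^{(q)}$.

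I do not anticipate a genuine obstacle: once the subfield code is recognized as the trace code, the proof is pure bookkeeping. The two points deserving care are (i) that this identification --- the link, via Delsarte's theorem, between the trace code of $\mathcal C$ and the subfield subcode of $\mathcal C^{\perp}$ --- is already established in \cite{DH} and should be invoked rather than reproved; and (ii) that the assertion is only that the rows of $G^{(q)}$ \emph{span} $\mathcal C^{(q)}$ over $\mathbb F_q$, not that they are linearly independent, so one must not claim $\dim_{\mathbb F_q}\mathcal C^{(q)}=km$ in general (indeed $\dim_{\mathbb F_q}\mathcal C^{(q)}$ can be strictly smaller).
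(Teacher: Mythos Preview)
The paper does not prove Lemma~\ref{lem3.1}; it simply quotes it from \cite{DH} without argument. Your computation is mathematically sound: the bookkeeping correctly shows that the $\mathbb F_q$-row span of $G^{(q)}$ equals the trace code $\{(\mathrm{Tr}_{q^m/q}(c_1),\ldots,\mathrm{Tr}_{q^m/q}(c_n)):\mathbf c\in\mathcal C\}$, and your closing caveat that $G^{(q)}$ need not have full row rank $km$ is correct and worth stating.

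One point of logical direction deserves attention. In \cite{DH} the subfield code $\mathcal C^{(q)}$ is \emph{defined} by the generator-matrix construction stated in this lemma, and the trace-code description you invoke as your starting point (``Recall that $\mathcal C^{(q)}$ is \ldots'') is a derived property, established there via Delsarte's theorem. Read literally, then, your argument takes a consequence of the lemma as input to prove the lemma. That said, your core computation is self-contained and genuinely establishes the equivalence of the two descriptions in either direction; viewed that way it supplies an independent, elementary derivation of the identification rather than a circular one. If you want the write-up to be airtight, simply rephrase the opening to say that you are proving the row span of $G^{(q)}$ coincides with the trace code of $\mathcal C$, and note that in \cite{DH} one of these is the definition and the other a theorem.
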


In  Lemma \ref{lem3.1}, let $q=2$, $m=2$ and  $\{\alpha_{1}=1,\alpha_{2}=w\}$ be a basis of $\Bbb F_{4}$ over $\Bbb F_{2}$.  Assume that  $g_{ij}=g_{ij}^{(0)}+wg_{ij}^{(1)}$, where $g_{ij}^{(0)}, g_{ij}^{(1)}\in \Bbb F_{2}.$  Hence  $ \mbox{Tr}_{4/2}( g_{ij} \alpha_{1})=g_{ij}^{(1)}$ and $ \mbox{Tr}_{4/2}( g_{ij} \alpha_{2})=g_{ij}^{(0)}+g_{ij}^{(1)}$. Then we have the following theorem. 

\begin{thm}\label{thm3.2} {\rm Let $\mathcal C$ be an $[n,k]$ linear code over $\mathbb{F}_{4}$ with generator matrix $G=G_{1}+wG_{2}$, where  $w\in \Bbb F_{4}$ with $w^{2}+w+1=0$ and $G_{1}, G_{2}$ are two matrices  over $\Bbb F_{2}$.  Then the binary  subfield code $\mathcal C^{(2)}$ with respect to $\mathcal C$ has a generator matrix  $$G^{(2)}=\left(\begin{array}{cclc} G_{2}\\ 
G_{1}+G_{2} \end{array}\right).$$ Moreover, if the quaternary code  $\mathcal C$ has the defining set $D=D_{1}+wD_{2}$ with $D_{1}, D_{2} \subseteq \mathbb F_2^m$,  then the binary  subfield code $\mathcal C^{(2)}$ with respect to $\mathcal C$ has  defining set: $$D^{(2)}=\{(\mathbf d_{2}, \mathbf d_{1}+\mathbf d_{2}): \mathbf d_{1}\in D_{1}, \mathbf d_{2}\in D_{2} \}.$$

}
\end{thm}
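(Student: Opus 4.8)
The plan is to obtain the first assertion as an immediate specialization of Lemma \ref{lem3.1}, and then to deduce the statement on defining sets by writing down an explicit generator matrix of the quaternary code $\mathcal C_D$ and applying the first assertion to it.

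For the first assertion I would invoke Lemma \ref{lem3.1} with $q=2$, $m=2$ and the $\mathbb F_2$-basis $\{\alpha_1=1,\ \alpha_2=w\}$ of $\mathbb F_4$. Writing each entry of $G$ as $g_{ij}=g_{ij}^{(0)}+wg_{ij}^{(1)}$ with $g_{ij}^{(0)},g_{ij}^{(1)}\in\mathbb F_2$, the only input needed is the pair of trace evaluations $\mbox{Tr}_{4/2}(g_{ij}\cdot 1)=g_{ij}^{(1)}$ and $\mbox{Tr}_{4/2}(g_{ij}\cdot w)=g_{ij}^{(0)}+g_{ij}^{(1)}$ (which follow at once from $\mbox{Tr}_{4/2}(x)=x+x^2$, $w^2=w+1$, and the fact that squaring fixes $\mathbb F_2$), exactly as recorded in the paragraph preceding the theorem. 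Consequently the block $G_i^{(2)}$ produced by Lemma \ref{lem3.1} has first row the $i$-th row of $G_2$ and second row the $i$-th row of $G_1+G_2$. Since a generator matrix is determined only up to a permutation of its rows, stacking $G_1^{(2)},\dots,G_k^{(2)}$ and reordering shows that the row space of $G^{(2)}$ is spanned by the rows of $G_2$ together with those of $G_1+G_2$, so that $\left(\begin{array}{c}G_2\\ G_1+G_2\end{array}\right)$ is a generator matrix of $\mathcal C^{(2)}$.

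For the statement on defining sets I would first record that a generator matrix of the quaternary code $\mathcal C_D$ with $D=\{\mathbf d_1,\dots,\mathbf d_n\}\subseteq\mathbb F_4^m$ is obtained by letting $x$ in \eqref{eq2.1} run over the standard basis $\mathbf e_1,\dots,\mathbf e_m$ of $\mathbb F_4^m$: the resulting $m\times n$ matrix $G$ has $\mathbf d_j$ as its $j$-th column. Writing $\mathbf d_j=\mathbf a_j+w\mathbf b_j$ with $\mathbf a_j,\mathbf b_j\in\mathbb F_2^m$ gives $G=G_1+wG_2$, where the $j$-th columns of $G_1$ and $G_2$ are $\mathbf a_j$ and $\mathbf b_j$; hence, by the computation underlying the first assertion, the $j$-th column of $G^{(2)}$ is $(\mathbf b_j,\ \mathbf a_j+\mathbf b_j)^{\mathsf{T}}\in\mathbb F_2^{2m}$. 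On the other hand, the binary code obtained from the generic construction \eqref{eq2.1} over $V_{2m}=\mathbb F_2^{2m}$ with defining set $\{(\mathbf b_j,\ \mathbf a_j+\mathbf b_j):1\le j\le n\}$ has, by the same reasoning applied to a standard basis of $\mathbb F_2^{2m}$, a generator matrix whose $j$-th column is again $(\mathbf b_j,\ \mathbf a_j+\mathbf b_j)^{\mathsf{T}}$. Therefore $\mathcal C^{(2)}$ coincides with the binary code defined by $\{(\mathbf b_j,\ \mathbf a_j+\mathbf b_j):1\le j\le n\}$. Finally, when $D=D_1+wD_2$, the uniqueness of the representation $a+wb$ with $a,b\in\mathbb F_2$ of an element of $\mathbb F_4$ makes $(\mathbf a,\mathbf b)\mapsto\mathbf a+w\mathbf b$ a bijection from $D_1\times D_2$ onto $D$; relabelling the index $j$ by the corresponding pair $(\mathbf a_j,\mathbf b_j)\in D_1\times D_2$ identifies $\{(\mathbf b_j,\ \mathbf a_j+\mathbf b_j):1\le j\le n\}$ with $D^{(2)}=\{(\mathbf d_2,\ \mathbf d_1+\mathbf d_2):\mathbf d_1\in D_1,\ \mathbf d_2\in D_2\}$, which is the asserted defining set.

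The argument is essentially bookkeeping, and I do not expect a genuine obstacle. The only two points that call for a little care are matching the row order coming from the blocks $G_i^{(2)}$ of Lemma \ref{lem3.1} with the displayed two-block form of $G^{(2)}$ (harmless, since generator matrices are defined up to row permutations), and checking that $(\mathbf d_1,\mathbf d_2)\mapsto(\mathbf d_2,\mathbf d_1+\mathbf d_2)$ sends $D_1\times D_2$ bijectively onto $D^{(2)}$ without coincidences; both reduce to the uniqueness of the $\mathbb F_2$-coordinates $(a,b)$ of $u=a+wb\in\mathbb F_4$.
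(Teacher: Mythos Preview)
Your proposal is correct and follows exactly the paper's approach: the first assertion is derived from Lemma~\ref{lem3.1} with $q=2$, $m=2$, basis $\{1,w\}$ and the trace identities $\mbox{Tr}_{4/2}(g_{ij})=g_{ij}^{(1)}$, $\mbox{Tr}_{4/2}(g_{ij}w)=g_{ij}^{(0)}+g_{ij}^{(1)}$ (precisely the computation the paper records in the paragraph preceding the theorem), and the statement on defining sets follows by reading off columns. Your additional remarks on the row permutation needed to pass from the interleaved blocks $G_i^{(2)}$ to the stacked form $\left(\begin{smallmatrix}G_2\\ G_1+G_2\end{smallmatrix}\right)$, and on the bijectivity of $(\mathbf d_1,\mathbf d_2)\mapsto(\mathbf d_2,\mathbf d_1+\mathbf d_2)$, make explicit details the paper leaves implicit.
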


{\color{blue} \begin{rem} {\rm There is a well-known  the  Plotkin construction,  for linear codes from old codes which is documented in  \cite{MS}. By Theorem 3.2, the subfield construction of quaternary codes includes  the  Plotkin construction.
}
\end{rem}

}

We give the following example to illustrate Theorem 3.2.

\begin{example}{\rm  Let $\mathcal C$ be a $[4,2]$ linear code over $\mathbb{F}_{4}$ with defining set $D=D_{1}+wD_{2}$, where $D_{1}=\{(0,1), (1,0)\}$ and $D_{2}=\{(0,1), (1,1)\}$.  Then its generator matrix is $$\left(\begin{array}{cclc} w & 0 &1+w  & 1\\  1+w & 1+w &w&  w\end{array}\right)=\left(\begin{array}{cclc} 0 & 0 &1  & 1\\  1 & 1 &0&  0\end{array}\right)+w\left(\begin{array}{cclc} 1 & 0 &1  & 0\\  1 & 1 &1&  1\end{array}\right).$$
By Lemma \ref{lem3.1}, the binary subfield code $\mathcal C^{(2)}$ with respect to $\mathcal C$ has a generator matrix 
$$G^{(2)}=\left(\begin{array}{cclc}1 & 0 &1  & 0\\  1 & 1 &1&  1\\ 
1 & 0 &0  & 1\\  0 & 0 &1&  1\\ 
 \end{array}\right).$$ By Theorem \ref{thm3.2}, $D^{(2)}=\{(1,1,1,0), (0,1,0,0), (1,1,0,1),(0,1,1,1)\}$.
}
\end{example}






\section{ Weight distributions of quaternary codes}





In this section, we will construct some quaternary codes via  simplicial complexes and determine their  weight distributions.

Let $D_{1},D_{2}$ be two  subsets of $\mathbb{F}_2^m$ and $D=D_{1}+wD_{2}\subseteq \Bbb F_4^{m*}$, where  $\Bbb F_4^{m*}$ is the set of non-zero element of  $\Bbb F_4^{m}$ and $w\in\Bbb F_4$ such that $w^2+w+1=0$.
We define a quaternary code as follows:
\begin{equation}\label{eq2}
\mathcal C_D=\{c_{D}(\mathbf{a})=(\mathbf{a}\cdot \mathbf{d})_{\mathbf{d}\in D}:\mathbf a\in \Bbb F_4^m\}.\end{equation}
First of all, from Equation \eqref{eq2}, it is easy to check that  the code $\mathcal{C}_D$ is a  quaternary linear code. The length of the code $\mathcal{C}_{D}$ is $|D|$. If $\mathbf{a}=\mathbf{0}$, then the Hamming weight of the codeword $c_{D}(\mathbf{a})$ is equal to  $\mbox{wt}(c_{D}(\mathbf{a}))=0$. Next we  assume that $\mathbf{a}\neq \mathbf{0}$.
Suppose that $\mathbf{a}=\boldsymbol{\alpha}+w\boldsymbol{\beta}$, $\mathbf{d}={\bf d}_1+w{\bf d}_2$, where ${\boldsymbol{\alpha}=(\alpha_1, \ldots, \alpha_m),\boldsymbol{\beta}=(\beta_1, \ldots, \beta_m)}\in \Bbb F_2^m$, ${\mathbf d_1}\in D_{1}$, and ${\mathbf d_2}\in D_{2}$.   Then
\begin{eqnarray*}\mbox{wt}(c_{D}(\mathbf{a}))
&=&\mbox{wt}((( \boldsymbol{\alpha}+w\boldsymbol{\beta})\cdot({\bf d}_1+w{\bf d}_2)) _{{\bf d}_1\in  D_{1}, {\bf d}_2\in D_{2}})\nonumber\\
&=&\mbox{wt}((\boldsymbol{\alpha}{\bf d}_1+w(\boldsymbol{\alpha}{\bf d}_2+\boldsymbol{\beta}{\bf d}_1)+w^2\boldsymbol{\beta}{\bf d}_2)_{{\bf d}_1\in  D_{1}, {\bf d}_2\in  D_{2}})\nonumber\\
&=&\mbox{wt}((\boldsymbol{\alpha}{\bf d_1}+\boldsymbol{\beta}{\bf d}_2+w(\boldsymbol{\beta}{\bf d}_2+\boldsymbol{\alpha}{\bf d}_2+\boldsymbol{\beta}{\bf d}_1))_{{\bf d}_1\in  D_{1}, {\bf d}_2\in  D_{2}}).
\end{eqnarray*}
By the definition of Hamming weight of vector $\mathbf{x}=\mathbf y+w\mathbf z\in \Bbb F_4^m$ with $\mathbf{y,z}\in \Bbb F_2^m$, $\mbox{wt}(\mathbf x)=0$ if and only if $\mathbf y=\mathbf z=\mathbf0$. Hence
\begin{eqnarray*} 
\mbox{wt}(c_{D}(\mathbf{a}))
&=&|D|-\sum_{\mathbf {d}_1\in  D_{1}}\sum_{\mathbf d_2\in D_{2}}(\frac12 \sum_{y\in\Bbb F_2} (-1)^{(\boldsymbol{\alpha}\mathbf{d}_1+\boldsymbol{\beta}\mathbf{d}_2)y})(\frac12 \sum_{z\in\Bbb F_2} (-1)^{(\boldsymbol{\alpha}{\mathbf d_2}+\boldsymbol{\beta}{(\mathbf d_1+\mathbf d_2)})z})\nonumber\\
&=&|D|-\frac14\sum_{\mathbf d_1\in  D_{1}}\sum_{\mathbf d_2\in  D_{2}}(1+(-1)^{\boldsymbol{\alpha}{\mathbf d_1}+\boldsymbol{\beta}{\mathbf d_2}})(1+(-1)^{\boldsymbol{\alpha}{\mathbf d_2}+\boldsymbol{\beta}{(\mathbf d_1+\mathbf d_2)}})\nonumber\\
&=&\frac 3 4|D|-\frac 14(\sum_{\mathbf d_1\in D_{1}} (-1)^{\boldsymbol{\alpha}\mathbf d_1})(\sum_{\mathbf d_2\in D_{2}}(-1)^{\boldsymbol{\beta}{\mathbf d_2}})\nonumber\\
&-&\frac 14(\sum_{\mathbf d_1\in  D_{1}} (-1)^{\boldsymbol{\beta}\mathbf d_1})(\sum_{\mathbf d_2\in  D_{2}}(-1)^{\boldsymbol{(\alpha+\beta)}{\mathbf d_2}})\nonumber\\
&-&\frac 14(\sum_{\mathbf d_1\in  D_{1}} (-1)^{\boldsymbol{(\alpha+\beta)}\mathbf d_1})(\sum_{\mathbf d_2\in  D_{2}}(-1)^{\boldsymbol{\alpha}{\mathbf d_2}}).
\end{eqnarray*}
For  a subset $P$ of $\Bbb F_{2}^{m}$ and $\mathbf u\in \Bbb F_{2}^{m}$, let us define $\chi_{\bf u}(P)=\sum_{\mathbf v\in P}(-1)^{\bf uv}. $ Then   
\begin{align} \label{eq4}\mbox{wt}(c_{D}(\mathbf{a}))=\frac 3 4|D|\nonumber\\
-\frac 14[ \chi_{\boldsymbol{\alpha} }(D_{1})\chi_{\boldsymbol{\beta} }(D_{2}) + \chi_{\boldsymbol{\beta} }(D_{1})\chi_{\boldsymbol{\alpha+\beta} }(D_{2})+ \chi_{\boldsymbol{\alpha+\beta} }(D_{1})\chi_{\boldsymbol{\alpha} }(D_{2})].\end{align}
Let $D^{c}=\Bbb F_4^{m*} \backslash D$ and $\delta$ be the Kronecker delta function. Then \begin{align*}\mbox{wt}(c_{D^{c}}(\mathbf{a}))=\frac 3 4(|D^{c}|-2^{2m}\delta_{\mathbf{0},\mathbf{a}})
\nonumber\\+\frac 14[ \chi_{\boldsymbol{\alpha} }(D_{1})\chi_{\boldsymbol{\beta} }(D_{2}) + \chi_{\boldsymbol{\beta} }(D_{1})\chi_{\boldsymbol{\alpha+\beta} }(D_{2})+\chi_{\boldsymbol{\alpha+\beta} }(D_{1})\chi_{\boldsymbol{\alpha} }(D_{2})]\nonumber.\end{align*}
By Equation \eqref{eq4},  we have  \begin{align}  \label{eq4.5} \mbox{wt}(c_{D^{c}}(\mathbf{a}))=3\times2^{2m-2}(1-\delta_{\mathbf{0},\mathbf{a}})- \mbox{wt}(c_{D}(\mathbf{a})).\end{align}

Recall that there is a bijection between $\mathbb{F}_2^m$ and $2^{[m]}$ being the power set of $[m]=\{1, \ldots, m\}$, defined by ${\bf v}\mapsto$ supp$({\bf v})$.   For a subset $D\subseteq \Bbb F_{4}^{m}$,  we use $D^{*}$ to denote the set $D\backslash \{\bf 0\}$.

From the proof of \cite[Theorem 5.3]{HKWY}, we derive  the following lemma, which is need in computing  weight distributions of the codes.

\begin{lemma}\label{lem4.1}\cite{HKWY} {\rm
For two subsets $A, B$ of $[m]$, we set
$$\mathcal{U}_1=\{u\in \mathbb{F}_2^m: u\cap (A\cup B)=\emptyset\},$$ $$\mathcal{U}_2=\{u\in \mathbb{F}_2^m: u\cap A=\emptyset, u\cap(B\backslash A)\neq\emptyset\},$$
$$\mathcal{U}_3=\{u\in \mathbb{F}_2^m: u\cap B=\emptyset, u\cap(A\backslash B)\neq\emptyset\},$$
$$\mathcal{U}_4=\{u\in \mathbb{F}_2^m: u\cap(A\backslash B)\neq\emptyset, u\cap(A\cap B)=\emptyset, u\cap (B\backslash A)\neq \emptyset\},$$
$$\mathcal{U}_5=\{u\in \mathbb{F}_2^m: u\cap(A\cap B)\neq\emptyset \}.$$ Then we have 
$$|\mathcal{U}_1|=2^{m-|A\cup B|}, |\mathcal{U}_2|=2^{m-|A|}-2^{m-|A\cup B|}=2^{m-|A\cup B|}(2^{|B\backslash A|}-1), $$
$$|\mathcal{U}_3|=2^{m-|B|}-2^{m-|A\cup B|}=2^{m-|A\cup B|}(2^{|A\backslash B|}-1),$$
$$|\mathcal{U}_4|=2^{m-|A\cup B|}(2^{|A\backslash B|}-1)(2^{|A\backslash B|}-1),$$
$$|\mathcal{U}_5|=2^{m-|A\cap B|}(2^{|A\cap B|}-1).$$}
\end{lemma}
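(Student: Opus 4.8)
The statement to prove is Lemma~\ref{lem4.1}, which is purely a combinatorial counting exercise: given two subsets $A,B\subseteq[m]$, partition a relevant collection of subsets $u\subseteq[m]$ according to how $u$ meets the three disjoint pieces $A\setminus B$, $A\cap B$, $B\setminus A$, and then count each class.

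My plan is to work throughout with the bijection $\mathbb F_2^m\leftrightarrow 2^{[m]}$ and to decompose $[m]$ into the four disjoint blocks $A\setminus B$, $A\cap B$, $B\setminus A$, and $R:=[m]\setminus(A\cup B)$, whose sizes are $|A\setminus B|=|A|-|A\cap B|$, $|A\cap B|$, $|B\setminus A|=|B|-|A\cap B|$, and $m-|A\cup B|$. A subset $u$ is freely determined by its four ``coordinates'' $u\cap(A\setminus B)$, $u\cap(A\cap B)$, $u\cap(B\setminus A)$, $u\cap R$, and the number of choices in each block is $2$ raised to the block size; moreover the conditions defining $\mathcal U_1,\dots,\mathcal U_5$ only involve intersections with unions of these blocks, so each $|\mathcal U_i|$ factors as a product over the blocks. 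Concretely: for $\mathcal U_1$ one needs $u\cap A=u\cap B=\emptyset$, i.e. $u\subseteq R$, giving $2^{m-|A\cup B|}$. For $\mathcal U_2$ one needs $u\cap A=\emptyset$ (so the coordinates on $A\setminus B$ and $A\cap B$ vanish) and $u\cap(B\setminus A)\neq\emptyset$; the $R$-coordinate is free and the $(B\setminus A)$-coordinate is any nonempty subset, giving $2^{m-|A\cup B|}(2^{|B\setminus A|}-1)=2^{m-|A|}-2^{m-|A\cup B|}$. The count for $\mathcal U_3$ is symmetric, swapping the roles of $A$ and $B$. For $\mathcal U_4$ one needs the $(A\cap B)$-coordinate to vanish while both the $(A\setminus B)$- and $(B\setminus A)$-coordinates are nonempty and the $R$-coordinate is free, giving $2^{m-|A\cup B|}(2^{|A\setminus B|}-1)(2^{|B\setminus A|}-1)$. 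For $\mathcal U_5$ one needs $u\cap(A\cap B)\neq\emptyset$ and all other coordinates free, giving $2^{m-|A\cap B|}(2^{|A\cap B|}-1)$.

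I would then record the algebraic simplifications that bring each expression to the form stated: $2^{m-|A|}-2^{m-|A\cup B|}=2^{m-|A\cup B|}(2^{|A\cup B|-|A|}-1)=2^{m-|A\cup B|}(2^{|B\setminus A|}-1)$, using $|A\cup B|-|A|=|B\setminus A|$, and analogously for the others; note that in the printed statement the exponent $(2^{|A\backslash B|}-1)(2^{|A\backslash B|}-1)$ for $|\mathcal U_4|$ should read $(2^{|A\backslash B|}-1)(2^{|B\backslash A|}-1)$, which I would correct. I would also remark that the only subtlety is that the ambient set over which $u$ ranges is all of $\mathbb F_2^m$ (not, say, $\mathbb F_2^m\setminus\{\mathbf 0\}$), so no exceptional term appears; the zero vector lands in $\mathcal U_1$ only.

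There is no real obstacle here — the argument is a textbook inclusion–exclusion / product-of-blocks count, and the proof is essentially a verification. The only thing requiring mild care is bookkeeping: making sure the five sets are being counted with exactly the stated conditions (in particular that $\mathcal U_2$ really does allow $u$ to meet $A\cap B$'s complement freely in $R$ while forbidding $u\cap A$ entirely), and matching the two equivalent closed forms given for $|\mathcal U_2|$ and $|\mathcal U_3|$. Since Lemma~\ref{lem4.1} is quoted from \cite{HKWY}, an acceptable alternative is simply to cite the computation in the proof of \cite[Theorem~5.3]{HKWY}; but the block-decomposition argument above is short enough to include in full.
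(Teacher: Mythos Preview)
Your block-decomposition argument is correct and is exactly the kind of elementary count that establishes the lemma; you also correctly catch the typo in $|\mathcal U_4|$, which should read $(2^{|A\setminus B|}-1)(2^{|B\setminus A|}-1)$. The paper itself gives no proof at all for this lemma: it merely states that the result is extracted from the proof of \cite[Theorem~5.3]{HKWY}, so your proposal is not so much a different route as the only route actually written out, and it is entirely adequate.
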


\begin{prop}\label{prop4.2} {\rm Let $A,B$ be  two  subsets of $[m]$ and $D=\Delta_A+w\Delta_B\subset \Bbb F_4^m$. Then $\mathcal{C}_{D^{*}}$ in Equation \eqref{eq2} is a $[2^{|A|+|B|}-1, |A\cup B|, 2^{|A|+|B|-1}]$ quaternary code and its weight distribution
 is presented in Table 2.  \begin{table}[h]
\caption{Weight distribution of the code in Proposition \ref{prop4.2}}
\begin{center}
\begin{tabu} to 0.7\textwidth{X[1,c]|X[2,c]}
\hline
Weight &Frequency\\
\hline
$0$&$1$\\
\hline
$2^{|A|+|B|-1}$&$3(2^{|A\backslash B|+|B\backslash A|}-1)$\\
\hline
$3\times2^{|A|+|B|-2}$&$4^{|A\cup B|}-1-3(2^{|A\backslash B|+|B\backslash A|}-1)$\\
 \hline

\end{tabu}
\end{center}
\end{table}

}

\end{prop}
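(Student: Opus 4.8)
The plan is to reduce everything to the closed formula \eqref{eq4} together with the single‑generator case of Lemma~\ref{lem2.1}, and then finish by a short counting argument.

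\emph{Step 1 (reduction to $\mathcal C_D$ and evaluation of the character sums).} Since $\emptyset\subseteq A$ and $\emptyset\subseteq B$ we have $\mathbf{0}=\mathbf{0}+w\mathbf{0}\in D=\Delta_A+w\Delta_B$, and because the representation $u=a+wb$ is unique the map $(\mathbf d_1,\mathbf d_2)\mapsto \mathbf d_1+w\mathbf d_2$ is injective, so $|D|=|\Delta_A|\,|\Delta_B|=2^{|A|+|B|}$ and $|D^{*}|=2^{|A|+|B|}-1$. Deleting the coordinate of $D$ indexed by $\mathbf 0$ changes no Hamming weight (that coordinate of $c_D(\mathbf a)$ is always $\mathbf a\cdot\mathbf 0=0$), so it suffices to determine the weight distribution of $\mathcal C_D$ via \eqref{eq4} with $D_1=\Delta_A$ and $D_2=\Delta_B$. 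Applying Lemma~\ref{lem2.1} to $\Delta_A$ (whose unique maximal element is $A$) and substituting $x_i=(-1)^{u_i}$ yields
$$\chi_{\mathbf u}(\Delta_A)=\prod_{i\in A}\bigl(1+(-1)^{u_i}\bigr)=\begin{cases}2^{|A|},&\mathrm{supp}(\mathbf u)\cap A=\emptyset,\\ 0,&\text{otherwise,}\end{cases}$$
and likewise for $\Delta_B$. Hence each of the three products $\chi_{\boldsymbol{\alpha}}(D_1)\chi_{\boldsymbol{\beta}}(D_2)$, $\chi_{\boldsymbol{\beta}}(D_1)\chi_{\boldsymbol{\alpha+\beta}}(D_2)$, $\chi_{\boldsymbol{\alpha+\beta}}(D_1)\chi_{\boldsymbol{\alpha}}(D_2)$ is either $0$ or $2^{|A|+|B|}=|D|$; writing $k=k(\mathbf a)\in\{0,1,2,3\}$ for the number of them equal to $|D|$, formula \eqref{eq4} becomes $\mathrm{wt}(c_D(\mathbf a))=\tfrac{|D|}{4}(3-k)$.

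\emph{Step 2 (the crux: $k=2$ is impossible).} Writing $\mathbf x|_S$ for the restriction to the coordinates in $S$ and using $\mathrm{supp}(\boldsymbol{\alpha}+\boldsymbol{\beta})=\mathrm{supp}(\boldsymbol{\alpha})\triangle\mathrm{supp}(\boldsymbol{\beta})$, the three products equal $|D|$ exactly under the respective conditions
$$E_1:\ \boldsymbol{\alpha}|_A=0,\ \boldsymbol{\beta}|_B=0;\qquad E_2:\ \boldsymbol{\beta}|_A=0,\ \boldsymbol{\alpha}|_B=\boldsymbol{\beta}|_B;\qquad E_3:\ \boldsymbol{\alpha}|_A=\boldsymbol{\beta}|_A,\ \boldsymbol{\alpha}|_B=0.$$
A direct check shows that any two of $E_1,E_2,E_3$ force $\boldsymbol{\alpha}|_{A\cup B}=\boldsymbol{\beta}|_{A\cup B}=0$, hence the third; equivalently $E_i\cap E_j=E_1\cap E_2\cap E_3$ whenever $i\neq j$. (Alternatively, restricting to a single coordinate $i\in A\setminus B$, resp.\ $i\in B\setminus A$, the pair $(\alpha_i,\beta_i)$ runs over $\mathbb{F}_2^2$ and exactly one, or all three, of the local conditions hold; a nonzero $(\alpha_i,\beta_i)$ with $i\in A\cap B$ kills all three; coordinates outside $A\cup B$ are unconstrained.) Consequently $k(\mathbf a)\in\{0,1,3\}$, so the only weights occurring are $0$ (when $k=3$), $2^{|A|+|B|-1}$ (when $k=1$), and $3\cdot 2^{|A|+|B|-2}$ (when $k=0$).

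\emph{Step 3 (dimension and frequencies).} By Step 2, $c_D(\mathbf a)=\mathbf 0$ iff $k(\mathbf a)=3$ iff $\boldsymbol{\alpha}$ and $\boldsymbol{\beta}$ are both supported on $[m]\setminus(A\cup B)$; this is an $\mathbb{F}_4$-subspace of $\mathbb{F}_4^m$ of dimension $m-|A\cup B|$, so it is the kernel of $\mathbf a\mapsto c_D(\mathbf a)$, the code $\mathcal C_{D^{*}}$ has dimension $|A\cup B|$ and $4^{|A\cup B|}$ codewords, and each nonzero codeword is the image of exactly $4^{m-|A\cup B|}$ vectors $\mathbf a$. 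Since $|E_1|=|E_2|=|E_3|=2^{m-|A|}2^{m-|B|}=2^{2m-|A|-|B|}$, all pairwise intersections coincide with $E_1\cap E_2\cap E_3$, and $|E_1\cap E_2\cap E_3|=4^{m-|A\cup B|}$, inclusion--exclusion gives $\#\{k=1\}=|E_1\cup E_2\cup E_3|-|E_1\cap E_2\cap E_3|=3\cdot 2^{2m-|A|-|B|}-3\cdot 4^{m-|A\cup B|}$. Dividing by the kernel size $4^{m-|A\cup B|}$ and using $2|A\cup B|-|A|-|B|=|A\setminus B|+|B\setminus A|$ gives $3\bigl(2^{|A\setminus B|+|B\setminus A|}-1\bigr)$ codewords of weight $2^{|A|+|B|-1}$; the remaining $4^{|A\cup B|}-1-3(2^{|A\setminus B|+|B\setminus A|}-1)$ nonzero codewords have weight $3\cdot 2^{|A|+|B|-2}$, which recovers Table~2. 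Finally, when $A\neq B$ this first frequency is positive and $2^{|A|+|B|-1}<3\cdot 2^{|A|+|B|-2}$, so $d=2^{|A|+|B|-1}$.

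\emph{Main obstacle.} The only place genuine content enters is Step~2, the collapse of the ``$k=2$'' case; once the three character products reduce to the trichotomy $k\in\{0,1,3\}$, everything else is bookkeeping. The one subtlety in Step~3 is that \eqref{eq4} counts vectors $\mathbf a$ while Table~2 counts codewords, so one must divide by $4^{m-|A\cup B|}$ (equivalently, identify the kernel explicitly first). One should also note that the stated minimum distance presupposes $A\neq B$: if $A=B$ the code is a one-weight code of nonzero weight $3\cdot 2^{2|A|-2}$.
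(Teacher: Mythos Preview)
Your proof is correct and follows the same overall strategy as the paper: reduce to formula~\eqref{eq4}, use Lemma~\ref{lem2.1} to see that each of the three products is either $0$ or $2^{|A|+|B|}$, rule out the case $T=2$ (your $k=2$), and then count. The paper's $T=2$ argument is a direct contradiction on supports; your formulation via $E_1,E_2,E_3$ and the observation $E_i\cap E_j=E_1\cap E_2\cap E_3$ is equivalent and arguably cleaner.

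Where you genuinely diverge is in the $T=1$ count. The paper invokes Lemma~\ref{lem4.1}, partitioning $\mathbb{F}_2^m$ into the five sets $\mathcal U_1,\ldots,\mathcal U_5$ and running a case analysis over the location of $\boldsymbol\alpha$ and $\boldsymbol\beta$ (with further subcases when both lie in the same $\mathcal U_i$). You bypass all of that: once you know the pairwise intersections collapse to the triple one, inclusion--exclusion gives $\#\{k=1\}$ in one line, and dividing by the kernel size $4^{m-|A\cup B|}$ finishes. Your route is shorter and more transparent for this proposition; the paper's $\mathcal U_i$-machinery is heavier here but is reused verbatim in Proposition~\ref{prop4.7}, where the character sums take five distinct values and a simple inclusion--exclusion no longer suffices. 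Your closing caveat that the stated minimum distance $2^{|A|+|B|-1}$ tacitly assumes $A\neq B$ (otherwise the code is one-weight of weight $3\cdot 2^{2|A|-2}$) is a correct observation the paper does not make explicit.
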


\begin{proof} It is easy to check that the length of the  code $\mathcal{C}_{D^{*}}$ is $|D^{*}|=2^{|A|+|B|}-1$.  To compute the weights and frequencies, we need to introduce  the following notation.

For $X$ a subset of $\mathbb{F}_2^m$, we use $\psi(\mathbf {u}|X)$ to denote a Boolean function in $m$-variable, and $\psi(\mathbf {u}|X)=1$ if and only if $\mathbf {u}\bigcap X=\emptyset$. For a vector $\mathbf {u}=(u_1, \ldots, u_m)\in \Bbb F_2^m$ and a nonempty simplicial complex $\Delta_A$, by Lemma \ref{lem2.1} we have 
\begin{multline}\label{eq4.6}
\sum_{\mathbf x\in \Delta_A} (-1)^{\mathbf {u}\cdot\bf x}=\mathcal{H}_{\Delta_A}((-1)^{u_1},(-1)^{u_2},\ldots, (-1)^{u_m})=\prod_{i\in A}(1+(-1)^{u_i})\\
=\prod_{i\in A}(2-2u_i)=2^{|A|}\prod_{i\in A}(1-u_i)=2^{|A|}\psi(\mathbf {u}|A).
\end{multline}
Suppose that $\mathbf{a}=\boldsymbol{\alpha}+w\boldsymbol{\beta}$.
By Equations  \eqref{eq4} and \eqref{eq4.6}
\begin{eqnarray*} \label{eq7}
&&\mbox{wt}(c_{D^{*}}(\mathbf{a}))=\mbox{wt}(c_{D^{}}(\mathbf{a}))= \frac 3 4|D^{}|\nonumber\\
&&-2^{|A|+|B|-2}[\psi(\boldsymbol{\alpha}|A)\psi(\boldsymbol{\beta}|B)+\psi(\boldsymbol{\beta}|A)\psi(\boldsymbol{\alpha+\beta}|B)+\psi(\boldsymbol{\alpha+\beta}|A)\psi(\boldsymbol{\alpha}|B)].\end{eqnarray*}
Let $T=\psi(\boldsymbol{\alpha}|A)\psi(\boldsymbol{\beta}|B)+\psi(\boldsymbol{\beta}|A)\psi(\boldsymbol{\alpha+\beta}|B)+\psi(\boldsymbol{\alpha+\beta}|A)\psi(\boldsymbol{\alpha}|B)$. We divide the proof into the following cases:

(1) $T=3$. In this case we have $\mbox{wt}(c_{D^{*}}(\mathbf{a}))=0$
 and $$\psi(\boldsymbol{\alpha}|A)=\psi(\boldsymbol{\alpha}|B)= \psi(\boldsymbol{\beta}|A)=\psi(\boldsymbol{\beta}|B)=\psi(\boldsymbol{\alpha+\beta}|A)=\psi(\boldsymbol{\alpha+\beta}|B)=1,$$ which is equivalent  to $$\boldsymbol{\alpha}\cap (A\cup B)=\emptyset\mbox{ and }\boldsymbol{\beta}\cap (A\cup B)=\emptyset.$$ By Lemma \ref{lem4.1}, the number of such $\mathbf{a}=\boldsymbol{\alpha}+w\boldsymbol{\beta}$ is $4^{m-|A\cup B|}.$

(2) $T=2$. Without loss of generality, suppose that $\psi(\boldsymbol{\alpha}|A)\psi(\boldsymbol{\beta}|B)=0$.
We have  \begin{eqnarray*}\left\{\begin{array}{llll}
\psi(\boldsymbol{\beta}|A)\psi(\boldsymbol{\alpha+\beta}|B)=1\\
\psi(\boldsymbol{\alpha+\beta}|A)\psi(\boldsymbol{\alpha}|B)=1
\end{array}\right.\Longleftrightarrow\left\{\begin{array}{llll}
\boldsymbol{\beta}\cap A=(\boldsymbol{\alpha+\beta})\cap B=\emptyset\\
\boldsymbol{\alpha}\cap B=(\boldsymbol{\alpha+\beta})\cap A=\emptyset.\\
\end{array}\right.
  \end{eqnarray*}
We have  $\boldsymbol{\alpha}\cap A\neq \emptyset$ or $\boldsymbol{\beta}\cap B\neq \emptyset$. Note that the support of the vector $\boldsymbol{\alpha+\beta}$ is equal to $(\mbox{supp}(\boldsymbol{\alpha}) \cup \mbox{supp}(\boldsymbol{\beta}))\backslash (\mbox{supp}(\boldsymbol{\alpha}) \cap \mbox{supp}(\boldsymbol{\beta}))$. From  $\boldsymbol{\alpha}\cap A\neq \emptyset$ and $\boldsymbol{\beta}\cap A=\emptyset$, we have $(\boldsymbol{\alpha+\beta})\cap A\neq \emptyset$, which is a contradiction with $(\boldsymbol{\alpha+\beta})\cap A=\emptyset.$ Similarly, we derive  
$(\boldsymbol{\alpha+\beta})\cap B\neq\emptyset$ from $\boldsymbol{\beta}\cap B\neq \emptyset$. Hence there is no $\mathbf{a}=\boldsymbol{\alpha}+w\boldsymbol{\beta}$ such that $T=2$.

(3) $T=1$. In this case we have $\mbox{wt}(c_{D^{*}}(\mathbf{a}))=2^{|A|+|B|-1}$.
If $\boldsymbol{\alpha}\in \mathcal{U}_1 $ in Lemma \ref{lem4.1},  then $T=1$ if and only if $\boldsymbol{\beta}\in \mathcal{U}_2 \cup \mathcal{U}_3$.  If $\boldsymbol{\alpha}\in \mathcal{U}_2 $ and $\boldsymbol{\beta}\in \mathcal{U}_1\cup \mathcal{U}_3 $, then $T=1$. If $\boldsymbol{\alpha}\in \mathcal{U}_2 $ and $\boldsymbol{\beta}\in \mathcal{U}_2 $, then $T= \psi(\boldsymbol{\alpha}+\boldsymbol{\beta}|B)=1$ if and only if $\boldsymbol{\alpha}+\boldsymbol{\beta}\cap B=\emptyset$, which is equivalent to $\mbox{supp}(\boldsymbol{\alpha})\cap B= \mbox{supp}(\boldsymbol{\beta})\cap B$. If $\boldsymbol{\alpha}\in \mathcal{U}_3 $ and $\boldsymbol{\beta}\in \mathcal{U}_1 $, then $T=1$. If $\boldsymbol{\alpha}\in \mathcal{U}_3 $ and $\boldsymbol{\beta}\in \mathcal{U}_3\cup \mathcal{U}_4 $, then $T=1$ if and only if $\mbox{supp}(\boldsymbol{\alpha})\cap A= \mbox{supp}(\boldsymbol{\beta})\cap A$.  If $\boldsymbol{\alpha}\in \mathcal{U}_4 $ and $\boldsymbol{\beta}\in \mathcal{U}_2 $, then $T=1$ if and only if $\mbox{supp}(\boldsymbol{\alpha})\cap B= \mbox{supp}(\boldsymbol{\beta})\cap B$. 
By Lemma \ref{lem4.1}, the number of such $\mathbf{a}=\boldsymbol{\alpha}+w\boldsymbol{\beta}$ is $$ 3\times 4^{m-|A\cup B|}(2^{|A\backslash B|+|B\backslash A| }-1). $$

(4) $T=0$. In this case we have $\mbox{wt}(c_{D^{*}}(\mathbf{a}))=3\times2^{|A|+|B|-2}$. By Lemma \ref{lem4.1}, the number of such $\mathbf{a}=\boldsymbol{\alpha}+w\boldsymbol{\beta}$ is  $4^{m-|A\cup B|}[4^{|A\cup B|}-1-3(2^{|A\backslash B|+|B\backslash A|}-1)]$.

This completes the proof.
\end{proof}

\begin{cor} {\rm Let $A$ be a  subset of $[m]$ and $D=\Bbb F_2^m+w\Delta_A\subset \Bbb F_4^m$ or $D=\Delta_A+w\Bbb F_2^m\subset \Bbb F_4^m$. Then $\mathcal{C}_{D^{*}}$ is a $[2^{m+|A|}-1, m,2^{m+|A|-1}]$ two-weight quaternary code and its
 weight distribution  is presented in Table 3.  \begin{table}[h]
\caption{Weight distribution of the code in Corollary 4.3}
\begin{center}
\begin{tabu} to 0.6\textwidth{X[1,c]|X[2,c]}
\hline
Weight &Frequency\\
\hline
$0$&$1$\\
\hline
$2^{m+|A|-1}$&$3\times(2^{m-| A|}-1)$\\
 \hline
$3\times2^{m+|A|-2}$&$2^{2m}-1-3\times(2^{m-| A|}-1)$\\
\hline
\end{tabu}
\end{center}
\end{table}

\begin{thm}\label{thm4.4} {\rm Let $A,B$ be  two  subsets of $[m]$ and $D=\Delta_A+w\Delta_B\subset \Bbb F_4^m$. Then $\mathcal{C}_{D^{c}}$ in \eqref{eq2} is a $[4^{m}-2^{|A|+|B|}, m,3\times 2^{2m-2}-3\times2^{|A|+|B|-2}]$ quaternary code and its weight distribution
 is presented in Table 4.  Moreover, the code $\mathcal{C}_{D^{c}}$ is a  Griesmer code.
  \begin{table}[h]
\caption{Weight distribution of the code in Theorem \ref{thm4.4}}
\begin{center}
\begin{tabu} to 0.9\textwidth{X[1,c]|X[1.7,c]}
\hline
Weight &Frequency\\
\hline
$0$&$1$\\
\hline
$3\times 2^{2m-2}-3\times2^{|A|+|B|-2}$&$4^{m-|A\cup B|}[4^{|A\cup B|}-1-3(2^{|A\backslash B|+|B\backslash A|}-1)]$\\
 \hline
$3\times 2^{2m-2}-2^{|A|+|B|-1}$&$4^{m-|A\cup B|}\times3(2^{|A\backslash B|+|B\backslash A|}-1)$\\
\hline

 $3\times 2^{2m-2}$&$4^{m-|A\cup B|}-1$\\
 \hline

\end{tabu}
\end{center}
\end{table}

}

\end{thm}
\begin{proof} By Equation \eqref{eq4.5}, we have the weight distribution of the code.

By the Griesmer bound, we have \begin{eqnarray*}&&\sum_{i=0}^{m-1}\bigg\lceil {\frac{3(2^{2m-2}-2^{|A|+|B|-2})}{4^i}} \bigg\rceil\\
&&=\sum_{i=0}^{m-1}{\frac{3\times 2^{2m-2}}{4^i}} -\sum_{i=0}^{m-1}\bigg\lfloor {\frac{3\times2^{|A|+|B|-2}}{4^i}} \bigg\rfloor \\
&&=3\times 2^{2m-2}+3\times 2^{2m-4}+\cdots+3\\
&&-(3\times2^{|A|+|B|-2}+3\times2^{|A|+|B|-4}+\cdots+X+Y),
 \end{eqnarray*}
where $X=3$ and $Y=0$ if $|A|+|B|-2$ is even; and  $X=6$ and $Y=1$ if $|A|+|B|-2$ is odd.  Then
\begin{eqnarray*}&&\sum_{i=0}^{m-1}\bigg\lceil {\frac{3(2^{2m-2}-2^{|A|+|B|-2})}{4^i}} \bigg\rceil\\
&&=\frac{3\times 2^{2m-2}-3\times\frac14}{1-\frac 14}-\frac{3\times 2^{|A|+|B|-2} -X\times \frac 14}{1-\frac 14}-Y\\
&&=2^{2m}-1-(2^{|A|+|B|}-1)=4^{m}-2^{|A|+|B|}.
 \end{eqnarray*}
 
 This completes the proof.
\end{proof}

The following are examples of Proposition 4.2 and Theorem 4.4. 

\begin{example} {\rm Let $m=4$, $A=\{1,2,3\}$, and $B=\{3,4\}$.

(1) The code $\mathcal{C}_{D^{*}}$ in Proposition 4.2  is a two-weight quaternary $[31, 4,16]$ linear code with weight enumerator $1+21z^{16}+234z^{24}$.   In fact, the optimal quaternary linear code has parameter $[ 31, 4, 22] $, according to \cite{G2}.

(2) The code $\mathcal{C}_{D^{c}}$ in Theorem 4.4 is a two-weight quaternary $[224, 4, 168]$ linear code with weight enumerator  $1+21z^{168}+234z^{176}$. According to \cite{G2}, the code is optimal.

}
\end{example}

\begin{example} {\rm Let $m=4$, $A=\{2,3\}$, and $B=\{3,4\}$. 

(1) The code $\mathcal{C}_{D^{*}}$ in Proposition 4.2  is a two-weight quaternary $[15, 3,8]$ linear code with weight enumerator $1+9z^{8}+54z^{12}$.  In fact, the optimal quaternary linear code has parameter $[ 15, 3, 11] $, according to \cite{G2}.

(2) The code $\mathcal{C}_{D^{c}}$ in Theorem 4.4 is a three-weight quaternary $[240, 4, 180]$ linear code with weight enumerator  $1+216z^{180}+36z^{184}+3z^{192}$. According to \cite{G2}, the code is optimal.

}
\end{example}

Next we consider the case of a simplicial  complex with two maximal elements.

\begin{prop}\label{prop4.7}{\rm Let $\Delta $ be a simplicial  complex with two maximal elements $A, B\subseteq [m]$.  Let $D=\Delta+w\Delta \subset \Bbb F_4^m$.  Then $\mathcal{C}_{D^{*}}$ in \eqref{eq2} is a $[(2^{|A|}+2^{|B|}-2^{|A\cap B|})^{2}-1, |A\cup B|]$ quaternary code and its weight distribution
 is presented in Table 5.  \begin{table}[h]
\caption{Weight distribution of the code in Proposition \ref{prop4.7}}
\begin{center}
\begin{tabu} to 1\textwidth{X[2,c]|X[1,c]}
\hline
Weight &Frequency\\
\hline
$0$&$1$\\
\hline
$2^{|A|}(3\times 2^{|A|-2}+2^{|B|}-2^{|A\cap B|})$&$3(2^{|A\backslash B|}-1)$\\
\hline
$2^{|B|}( 2^{|A|}+3\times2^{|B|-2}-2^{|A\cap B|})$&$3(2^{|B\backslash A|}-1)$\\
\hline
$(2^{|A|}+2^{|B|})(3\times2^{|A|-2}+3\times2^{|B|-2}- 2^{|A\cap B|})$&$3(2^{|A\backslash B|}-1)(2^{|B\backslash A|}-1)$\\
 \hline
 $\frac342^{|A|}(2^{|A|}+2^{|B|+1}-2^{|A\cap B|+1})$&$(2^{|A\backslash B|}-1)(2^{|A\backslash B|}-2)$\\
 \hline
  $\frac342^{|B|}(2^{|A|+1}+2^{|B|}-2^{|A\cap B|+1})$&$(2^{|B\backslash A|}-1)(2^{|B\backslash A|}-2)$\\
 \hline

  $\frac34(2^{|A|}+2^{|B|}-2^{|A\cap B|})^{2}-\frac 14 (2^{|A|}-2^{|A\cap B|})(2^{|B|}-2^{|A\cap B|})+\frac14 2^{|A\cap B|}(2^{|A|}+2^{|B|}-2^{|A\cap B|+1})$&$6(2^{|A\backslash B|}-1)(2^{|B\backslash A|}-1)$\\
 \hline
   $\frac34(2^{|A|}+2^{|B|}-2^{|A\cap B|})^{2}+\frac14 2^{|A\cap B|}(2^{|A|+1}-3\times 2^{|A\cap B|})$&$3(2^{|A\backslash B|}-1)(2^{|B\backslash A|}-1)(2^{|B\backslash A|}-2)$\\
 \hline
   $\frac34(2^{|A|}+2^{|B|}-2^{|A\cap B|})^{2}+\frac14 2^{|A\cap B|}(2^{|B|+1}-3\times 2^{|A\cap B|})$&$3(2^{|A\backslash B|}-1)(2^{|B\backslash A|}-1)(2^{|A\backslash B|}-2)$\\
 \hline

  $\frac 34 (2^{|A|}+2^{|B|})( 2^{|A|}+2^{|B|}-2^{|A\cap B|+1})$&$(2^{|A\backslash B|}-1)(2^{|B\backslash A|}-1)(2^{|A\backslash B|}-2)(2^{|B\backslash A|}-2)$\\
 \hline
  $\frac34(2^{|A|}+2^{|B|}-2^{|A\cap B|})^{2}$&$4^{|A\cup B|}-4^{|A\backslash B|+|B\backslash A|}$\\
 \hline
\end{tabu}
\end{center}
\end{table}
}

\end{prop}

\begin{proof} It is easy to check that the length of the  code $\mathcal{C}_{D^{*}}$ is $|D^{*}|=(2^{|A|}+2^{|B|}-2^{|A\cap B|})^{2}-1$.  

By Lemma \ref{lem2.1} 
\begin{multline}\label{eq4.7}
\chi_{\mathbf {u}}(\Delta)=\sum_{\mathbf x\in \Delta} (-1)^{\mathbf {u}\cdot\bf x}=\mathcal{H}_{\Delta}((-1)^{u_1},(-1)^{u_2},\ldots, (-1)^{u_m})\\=\prod_{i\in A}(1+(-1)^{u_i})+\prod_{i\in B}(1+(-1)^{u_i})-\prod_{i\in A\cap B}(1+(-1)^{u_i})\\
=2^{|A|}\psi(\mathbf {u}|A)+2^{|B|}\psi(\mathbf {u}|B)-2^{|A\cap B|}\psi(\mathbf {u}|A\cap B).
\end{multline}
Suppose that $\mathbf{a}=\boldsymbol{\alpha}+w\boldsymbol{\beta}$.
By Equations   \eqref{eq4} and \eqref{eq4.7}
\begin{eqnarray*} \label{eq7}
&&\mbox{wt}(c_{D^{*}}(\mathbf{a}))=\mbox{wt}(c_{D^{}}(\mathbf{a}))= \frac 3 4|D^{}|\nonumber\\
&&-\frac 14[ \chi_{\boldsymbol{\alpha} }(\Delta)\chi_{\boldsymbol{\beta} }(\Delta) + \chi_{\boldsymbol{\beta} }(\Delta)\chi_{\boldsymbol{\alpha+\beta} }(\Delta)+ \chi_{\boldsymbol{\alpha+\beta} }(\Delta)\chi_{\boldsymbol{\alpha} }(\Delta)].\end{eqnarray*}
By Lemma \ref{lem4.1}, \begin{eqnarray*} \chi_{\boldsymbol{\alpha} }(\Delta)=\left\{\begin{array}{llll}
2^{|A|}+2^{|B|}-2^{|A\cap B|}, \mbox{ if } \boldsymbol{\alpha}\in \mathcal{U}_1,\\

2^{|A|}-2^{|A\cap B|}, \mbox{ if } \boldsymbol{\alpha}\in \mathcal{U}_2,\\

2^{|B|}-2^{|A\cap B|}, \mbox{ if } \boldsymbol{\alpha}\in \mathcal{U}_3,\\

-2^{|A\cap B|}, \mbox{ if } \boldsymbol{\alpha}\in \mathcal{U}_4,\\
0, \mbox{ if } \boldsymbol{\alpha}\in \mathcal{U}_5.\\

\end{array}\right.  \end{eqnarray*}

Due to the above value distribution, we determine the location of $ \boldsymbol{\alpha}+ \boldsymbol{\beta}$ in the following table. 

\begin{table}[!htbp]
\centering
\begin{tabular}{|c|c|c|c|c|c|c|c}
\hline
\diagbox{ $\boldsymbol{\alpha}$}{$ \boldsymbol{\alpha}+ \boldsymbol{\beta}$}{$ \boldsymbol{\beta}$}&$\mathcal{U}_1$&$\mathcal{U}_2$&$\mathcal{U}_3$&$\mathcal{U}_4$&$\mathcal{U}_5$\\ 
\hline
$\mathcal{U}_1$&$\mathcal{U}_1$&$\mathcal{U}_2$&$\mathcal{U}_3$&$\mathcal{U}_4$&$\mathcal{U}_5$\\
\hline
$\mathcal{U}_2$&$\mathcal{U}_2$&$\mathcal{U}_1$\mbox{ or } $\mathcal{U}_2$&$\mathcal{U}_4$&$\mathcal{U}_3$\mbox{ or } $\mathcal{U}_4$&$\mathcal{U}_5$\\
\hline
$\mathcal{U}_3$&$\mathcal{U}_3$&$\mathcal{U}_4$&$\mathcal{U}_1$\mbox{ or }$\mathcal{U}_3$&$\mathcal{U}_2$\mbox{ or } $\mathcal{U}_4$&$\mathcal{U}_5$\\
\hline
$\mathcal{U}_4$&$\mathcal{U}_4$&$\mathcal{U}_3$\mbox{ or } $\mathcal{U}_4$&$\mathcal{U}_2$\mbox{ or }$\mathcal{U}_4$&$\mathcal{U}_1$\mbox{ or }$\mathcal{U}_2$\mbox{ or } $\mathcal{U}_3$\mbox{ or }$\mathcal{U}_4$&$\mathcal{U}_5$\\
\hline
$\mathcal{U}_5$&$\mathcal{U}_5$&$\mathcal{U}_5$&$\mathcal{U}_5$&$\mathcal{U}_5$&\\
\hline
\end{tabular}
\end{table}
It needs to be further pointed out that if $ \boldsymbol{\alpha}, \boldsymbol{\beta}\in \mathcal{U}_4$, then \begin{eqnarray*}  \boldsymbol{\alpha}+ \boldsymbol{\beta}\in\left\{\begin{array}{llll}
\mathcal{U}_1, \mbox{ if } \mbox{supp}(\boldsymbol{\alpha})\cap A=\mbox{supp}(\boldsymbol{\beta})\cap A, \mbox{supp}(\boldsymbol{\alpha})\cap B=\mbox{supp}(\boldsymbol{\beta})\cap B,\\

\mathcal{U}_2, \mbox{ if } \mbox{supp}(\boldsymbol{\alpha})\cap A=\mbox{supp}(\boldsymbol{\beta})\cap A, \mbox{supp}(\boldsymbol{\alpha})\cap B\neq\mbox{supp}(\boldsymbol{\beta})\cap B,\\

\mathcal{U}_3, \mbox{ if } \mbox{supp}(\boldsymbol{\alpha})\cap A\neq\mbox{supp}(\boldsymbol{\beta})\cap A, \mbox{supp}(\boldsymbol{\alpha})\cap B=\mbox{supp}(\boldsymbol{\beta})\cap B,\\

\mathcal{U}_4, \mbox{ if } \mbox{supp}(\boldsymbol{\alpha})\cap A\neq\mbox{supp}(\boldsymbol{\beta})\cap A, \mbox{supp}(\boldsymbol{\alpha})\cap B\neq\mbox{supp}(\boldsymbol{\beta})\cap B.\\

\end{array}\right.  \end{eqnarray*}

Let $T=\chi_{\boldsymbol{\alpha} }(\Delta)\chi_{\boldsymbol{\beta} }(\Delta) + \chi_{\boldsymbol{\beta} }(\Delta)\chi_{\boldsymbol{\alpha+\beta} }(\Delta)+ \chi_{\boldsymbol{\alpha+\beta} }(\Delta)\chi_{\boldsymbol{\alpha} }(\Delta).$  Suppose that $\boldsymbol{\alpha}\in \mathcal{U}_1$.  If  $\boldsymbol{\beta}\in \mathcal{U}_1$,  then $T=3(2^{|A|}+2^{|B|}-2^{|A\cap B|})^{2}.$ If $\boldsymbol{\beta}\in \mathcal{U}_2$, Then $T=2(2^{|A|}+2^{|B|}-2^{|A\cap B|})(2^{|A|}-2^{|A\cap B|})+(2^{|A|}-2^{|A\cap B|})^{2}. $ If $\boldsymbol{\beta}\in \mathcal{U}_3$, Then $T=2(2^{|A|}+2^{|B|}-2^{|A\cap B|})(2^{|B|}-2^{|A\cap B|})+(2^{|B|}-2^{|A\cap B|})^{2}. $ If $\boldsymbol{\beta}\in \mathcal{U}_4$, Then $T=2(2^{|A|}+2^{|B|}-2^{|A\cap B|})(-2^{|A\cap B|})+(2^{|A\cap B|})^{2}. $ If $\boldsymbol{\beta}\in \mathcal{U}_5$, Then $T=0. $

Similarly, we can determine the values of $T$ under the condition of $\boldsymbol{\alpha}\in \mathcal{U}_i, i=2,3,4,5$. Then the results follow from  Lemma 4.1. 
\end{proof}



\begin{rem}{\rm

By massive computation, weight distributions of quaternary codes  can be also determined in the case of $D=D_{1}+wD_{2}$, where $D_{1}$ is generated by two maximal elements $A, B\subseteq [m]$ and  $D_{2}$ is generated by two maximal elements $C, F\subseteq [m].$ 
}
\end{rem}

{\color{blue}For some special sets $A, B$, we obtain some few-weight quaternary codes. 

\begin{cor} {\rm Let $\Delta $ be a simplicial  complex with two maximal elements $A, B\subseteq [m]$.  Let $D=\Delta+w\Delta \subset \Bbb F_4^m$. 

(i) If $A\cap B=\emptyset$ and $|A|=|B|=1$, then $\mathcal{C}_{D^{*}}$ in  Proposition  4.7 is a three-weight $[8,2,5]$ quaternary code and its weight enumerator $1+6z^{5}+6z^{7}+3z^{8}$.

(ii) If $A\cap B=\emptyset$ and $|A|=|B|>1$, then $\mathcal{C}_{D^{*}}$ in Proposition 4.7 is a six-weight $[(2^{|A|+1}-1)^{2}-1, 2|A|]$ quaternary code and its weight distribution
 is presented in Table 6.  \begin{table}[h]
\caption{Weight distribution of the code in Corollary 4.9 (ii)}
\begin{center}
\begin{tabu} to 0.8\textwidth{X[2,c]|X[2,c]}
\hline
Weight &Frequency\\
\hline
$0$&$1$\\
\hline
$2^{|A|}(3\times2^{|A|-2}+2^{|A|}-1)$&$6(2^{| A|}-1)$\\
 \hline
$2^{|A|+1}(3\times2^{|A|-1}-1)$&$3(2^{| A|}-1)^{2}$\\
 \hline
 
 $3\times2^{|A|-1}(3\times2^{|A|-1}-1)$&$3(2^{| A|}-1)(2^{| A|}-2)$\\
 \hline
 
  $11\times2^{2|A|-2}-2^{|A|+1}$&$6(2^{| A|}-1)^{2}$\\
 \hline 
 
  $3\times2^{2|A|}-3\times2^{|A|}+2^{|A|-1}$&$6(2^{| A|}-1)^{2}(2^{| A|}-2)$\\
 \hline
 
  $3\times2^{|A|}(2^{|A|}-1)$&$(2^{| A|}-1)^{2}(2^{| A|}-2)^{2}$\\
 \hline
\end{tabu}
\end{center}
\end{table}

(iii) If $A\cap B\neq\emptyset$, $|A|=|B|$,  and $|A\backslash B|=|B\backslash A|=1$, then $\mathcal{C}_{D^{*}}$ in Proposition 4.7 is a four-weight $[9\times 2^{2|A|-2}-1, |A|+1]$ quaternary code and its weight distribution
 is presented in Table 7.  \begin{table}[h]
\caption{Weight distribution of the code in Corollary 4.9 (iii)}
\begin{center}
\begin{tabu} to 0.6\textwidth{X[2,c]|X[1,c]}
\hline
Weight &Frequency\\
\hline
$0$&$1$\\
\hline
$2^{2|A|-2}+2^{2|A|}$&$6$\\
\hline

$2^{2|A|+1}$&$3$\\
 \hline

  $26\times 2^{2|A|-4}$&$6$\\
 \hline
  $27\times 2^{2|A|-4}$&$4^{|A|+1}-16$\\
 \hline
\end{tabu}
\end{center}
\end{table}

}
\end{cor}

}

Similar to Theorem 4.4, we have the following theorem.

\begin{thm}\label{thm4.6}{\rm Let $\Delta $ be a simplicial  complex with two maximal elements $A, B\subseteq [m]$.  Let $D=\Delta+w\Delta \subset \Bbb F_4^m$.  Then $\mathcal{C}_{D^{c}}$ in \eqref{eq2} is a $[4^{m}-(2^{|A|}+2^{|B|}-2^{|A\cap B|})^{2}, m]$ quaternary code and its weight distribution
 is presented in Table 8.  \begin{table}[h]
\caption{Weight distribution of the code in Theorem \ref{thm4.6}}
\begin{center}
\begin{tabu} to 1\textwidth{X[2,c]|X[1.5,c]}
\hline
Weight &Frequency\\
\hline
$0$&$1$\\
\hline
$3\times 2^{2m-2}-2^{|A|}(3\times 2^{|A|-2}+2^{|B|}-2^{|A\cap B|})$&$3(2^{|A\backslash B|}-1)4^{m-|A\cup B|}$\\
\hline
$3\times 2^{2m-2}-2^{|B|}( 2^{|A|}+3\times2^{|B|-2}-2^{|A\cap B|})$&$3(2^{|B\backslash A|}-1)4^{m-|A\cup B|}$\\
\hline
$3\times 2^{2m-2}-(2^{|A|}+2^{|B|})(3\times2^{|A|-2}+3\times2^{|B|-2}- 2^{|A\cap B|})$&$3(2^{|A\backslash B|}-1)(2^{|B\backslash A|}-1)4^{m-|A\cup B|}$\\
 \hline

 $3\times 2^{2m-2}-\frac342^{|A|}(2^{|A|}+2^{|B|+1}-2^{|A\cap B|+1})$&$(2^{|A\backslash B|}-1)(2^{|A\backslash B|}-2)4^{m-|A\cup B|}$\\
 \hline
  $3\times 2^{2m-2}-\frac342^{|B|}(2^{|A|+1}+2^{|B|}-2^{|A\cap B|+1})$&$(2^{|B\backslash A|}-1)(2^{|B\backslash A|}-2)4^{m-|A\cup B|}$\\
 \hline

  $3\times 2^{2m-2}-\frac34(2^{|A|}+2^{|B|}-2^{|A\cap B|})^{2}+\frac 14 (2^{|A|}-2^{|A\cap B|})(2^{|B|}-2^{|A\cap B|})-\frac14 2^{|A\cap B|}(2^{|A|}+2^{|B|}-2^{|A\cap B|+1})$&$6(2^{|A\backslash B|}-1)(2^{|B\backslash A|}-1)4^{m-|A\cup B|}$\\
 \hline
   $3\times 2^{2m-2}-\frac34(2^{|A|}+2^{|B|}-2^{|A\cap B|})^{2}-\frac14 2^{|A\cap B|}(2^{|A|+1}-3\times 2^{|A\cap B|})$&$3(2^{|A\backslash B|}-1)(2^{|B\backslash A|}-1)(2^{|B\backslash A|}-2)4^{m-|A\cup B|}$\\
 \hline
   $3\times 2^{2m-2}-\frac34(2^{|A|}+2^{|B|}-2^{|A\cap B|})^{2}-\frac14 2^{|A\cap B|}(2^{|B|+1}-3\times 2^{|A\cap B|})$&$3(2^{|A\backslash B|}-1)(2^{|B\backslash A|}-1)(2^{|A\backslash B|}-2)4^{m-|A\cup B|}$\\
 \hline

  $3\times 2^{2m-2}-\frac 34 (2^{|A|}+2^{|B|})( 2^{|A|}+2^{|B|}-2^{|A\cap B|+1})$&$(2^{|A\backslash B|}-1)(2^{|B\backslash A|}-1)(2^{|A\backslash B|}-2)(2^{|B\backslash A|}-2)4^{m-|A\cup B|}$\\
 \hline
  $3\times 2^{2m-2}-\frac34(2^{|A|}+2^{|B|}-2^{|A\cap B|})^{2}$&$(4^{|A\cup B|}-4^{|A\backslash B|+|B\backslash A|})4^{m-|A\cup B|}$\\
 \hline
  $3\times 2^{2m-2}$&$4^{m-|A\cup B|}-1$\\
 \hline
\end{tabu}
\end{center}
\end{table}
}

\end{thm}

The following is an example of Proposition 4.7 and Theorem 4.10. 

\begin{example} {\rm Let $m=4$, $A=\{1,2,3\}$, and $B=\{3,4\}$.

(1) The code $\mathcal{C}_{D^{*}}$ in Proposition 4.7  is a seven-weight quaternary $[99, 4,36]$ linear code with weight enumerator $$1+3z^{36}+9z^{64}+6z^{72}+192z^{75}+18z^{76}+9z^{84}+18z^{88}.$$   In fact, the optimal quaternary linear code has parameter $[99, 4,73] $ and its dual code has parameters $[99, 95,2]$, according to \cite{G2}.

(2) The code $\mathcal{C}_{D^{c}}$ in Theorem 4.10 is a seven-weight quaternary $[156, 4, 104]$ linear code with weight enumerator $$1+18z^{104}+9z^{108}+18z^{116}+192z^{117}+6z^{120}+9z^{128}+3z^{156}.$$  In fact, the optimal quaternary linear code has parameter $[156, 4, 116]$ and its dual code has parameters $[156, 152,2]$, according to \cite{G2}.

}
\end{example}

{\color{blue} By Corollary  4.9 and Theorem  4.10, we obtain some few-weight quaternary  codes.

\begin{cor} {\rm Let $\Delta $ be a simplicial  complex with two maximal elements $A, B\subseteq [m]$.  Let $D=\Delta+w\Delta \subset \Bbb F_4^m$. 

(i) If $A\cap B=\emptyset$ and $|A|=|B|=1$, then $\mathcal{C}_{D^{c}}$ in Theorem 4.10 is a four-weight $[4^{m}-9,m]$ quaternary code and its weight enumerator $$1+6\times4^{m-2}z^{3\times 2^{2m-2}-5}+6\times4^{m-2}z^{3\times 2^{2m-2}-7}+3\times4^{m-2}z^{3\times 2^{2m-2}-8}+(4^{m-2}-1)z^{3\times 2^{2m-2}}.$$

(ii) If $A\cap B=\emptyset$ and $|A|=|B|>1$, then $\mathcal{C}_{D^{c}}$ in Theorem 4.10 is a seven-weight $[4^{m}-(2^{|A|+1}-1)^{2}, m]$ quaternary code and its weight distribution
 is presented in Table 9.  \begin{table}[h]
\caption{Weight distribution of the code in Corollary 4.12 (ii)}
\begin{center}
\begin{tabu} to 1\textwidth{X[2,c]|X[1.8,c]}
\hline
Weight &Frequency\\
\hline
$0$&$1$\\
\hline
$3\times 2^{2m-2}-2^{|A|}(3\times2^{|A|-2}+2^{|A|}-1)$&$6(2^{| A|}-1)4^{m-2|A|}$\\
 \hline
$3\times 2^{2m-2}-2^{|A|+1}(3\times2^{|A|-1}-1)$&$3(2^{| A|}-1)^{2}4^{m-2|A|}$\\
 \hline
 
 $3\times 2^{2m-2}-3\times2^{|A|-1}(3\times2^{|A|-1}-1)$&$3(2^{| A|}-1)(2^{| A|}-2)4^{m-2|A|}$\\
 \hline
 
  $3\times 2^{2m-2}-11\times2^{2|A|-2}-2^{|A|+1}$&$6(2^{| A|}-1)^{2}4^{m-2|A|}$\\
 \hline 
 
  $3\times 2^{2m-2}-3\times2^{2|A|}-3\times2^{|A|}+2^{|A|-1}$&$6(2^{| A|}-1)^{2}(2^{| A|}-2)4^{m-2|A|}$\\
 \hline
 
  $3\times 2^{2m-2}-3\times2^{|A|}(2^{|A|}-1)$&$(2^{| A|}-1)^{2}(2^{| A|}-2)^{2}4^{m-2|A|}$\\
 \hline
   $3\times 2^{2m-2}$&$4^{m-2|A|}-1$\\
 \hline
\end{tabu}
\end{center}
\end{table}

(iii) If $A\cap B\neq\emptyset$, $|A|=|B|$,  and $|A\backslash B|=|B\backslash A|=1$, then $\mathcal{C}_{D^{c}}$ in Theorem 4.10 is a five-weight $[4^{m}-9\times 2^{2|A|-2}, m]$ quaternary code and its weight distribution
 is presented in Table 10.  \begin{table}[h]
\caption{Weight distribution of the code in Corollary 4.12 (iii)}
\begin{center}
\begin{tabu} to 0.8\textwidth{X[2,c]|X[2,c]}
\hline
Weight &Frequency\\
\hline
$0$&$1$\\
\hline
$3\times 2^{2m-2}-2^{2|A|-2}+2^{2|A|}$&$6\times 4^{m-|A|-1}$\\
\hline

$3\times 2^{2m-2}-2^{2|A|+1}$&$3\times 4^{m-|A|-1}$\\
 \hline

  $3\times 2^{2m-2}-26\times 2^{2|A|-4}$&$6\times4^{m-|A|-1}$\\
 \hline
  $3\times 2^{2m-2}-27\times 2^{2|A|-4}$&$(4^{|A|+1}-16)4^{m-|A|-1}$\\
 \hline
   $3\times 2^{2m-2}$&$4^{m-|A|-1}-1$\\
 \hline
\end{tabu}
\end{center}
\end{table}

}
\end{cor}
}

}

\end{cor}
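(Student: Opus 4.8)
The plan is to obtain this corollary as an immediate specialization of Proposition~\ref{prop4.2}. Under the bijection between $\Bbb F_2^m$ and the power set $2^{[m]}$, the whole space $\Bbb F_2^m$ is exactly the simplicial complex $\Delta_{[m]}$ generated by the single maximal element $[m]$. Hence $D=\Bbb F_2^m+w\Delta_A=\Delta_{[m]}+w\Delta_A$ has precisely the form $\Delta_{A'}+w\Delta_{B'}$ treated in Proposition~\ref{prop4.2}, with $A'=[m]$ and $B'=A$; since $[m]$ is a legitimate subset of $[m]$, the hypotheses of Proposition~\ref{prop4.2} apply verbatim.

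First I would record the set cardinalities forced by this choice. Since $A\subseteq[m]$, we have $A'\cup B'=[m]$, so $|A'\cup B'|=m$; moreover $A'\setminus B'=[m]\setminus A$ with $|A'\setminus B'|=m-|A|$, while $B'\setminus A'=\emptyset$ so $|B'\setminus A'|=0$, and $|A'|+|B'|=m+|A|$. Substituting these into the conclusion of Proposition~\ref{prop4.2} gives length $2^{|A'|+|B'|}-1=2^{m+|A|}-1$, dimension $|A'\cup B'|=m$, and minimum distance $2^{|A'|+|B'|-1}=2^{m+|A|-1}$, matching the stated parameters. For the weight distribution I would read off Table 2: the intermediate weight $2^{|A'|+|B'|-1}=2^{m+|A|-1}$ occurs with frequency $3\bigl(2^{|A'\setminus B'|+|B'\setminus A'|}-1\bigr)=3\bigl(2^{m-|A|}-1\bigr)$, and the top weight $3\times2^{|A'|+|B'|-2}=3\times2^{m+|A|-2}$ occurs with the complementary frequency $4^{|A'\cup B'|}-1-3\bigl(2^{m-|A|}-1\bigr)=2^{2m}-1-3\bigl(2^{m-|A|}-1\bigr)$. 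This reproduces Table 3 exactly.

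The second shape $D=\Delta_A+w\Bbb F_2^m$ is handled identically by taking $A'=A$ and $B'=[m]$. Every quantity entering Table 2 depends only on the symmetric combinations $|A'|+|B'|$, $|A'\cup B'|$, and $|A'\setminus B'|+|B'\setminus A'|$, all of which are invariant under interchanging $A'$ and $B'$; hence this case produces the same parameters and the same table, and no separate computation is needed.

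There is no real analytic obstacle here, as the argument is pure bookkeeping on top of Proposition~\ref{prop4.2}; the only point that genuinely needs checking is the \emph{two-weight} assertion. The intermediate weight class is nonempty precisely when $3\bigl(2^{m-|A|}-1\bigr)>0$, i.e. when $|A|<m$, and the top weight class is nonempty throughout the relevant range, so the code indeed carries exactly two nonzero weights for $0\le|A|\le m-1$; in the degenerate case $|A|=m$ one has $D^{*}=\Bbb F_4^{m*}$ and the code collapses to a single nonzero weight $3\times2^{2m-2}$, which I would flag explicitly. This completes the plan.
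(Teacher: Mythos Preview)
Your proposal is correct and is exactly the approach the paper intends: Corollary~4.3 is stated without a separate proof precisely because it is the specialization of Proposition~\ref{prop4.2} to $A'=[m]$, $B'=A$ (or the swapped version), which is just what you do. Your extra observation about the degenerate case $|A|=m$ collapsing to a one-weight code is a valid caveat that the paper does not mention.
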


\section{weight distributions of the subfield codes}

In this section, we will determine weight distributions of the binary subfield codes of those quaternary codes obtained in Section 4.

\begin{prop}{\rm Let $A,B$ be  two  subsets of $[m]$ and $D=\Delta_A+w\Delta_B\subset \Bbb F_4^m$. Then the subfield code $\mathcal{C}^{(2)}_{D^{*}}$ with respect to  $\mathcal{C}_{D^{*}}$ in Proposition 4.2  is a $[2^{|A|+|B|}-1, |A|+|B|,2^{|A|+|B|-1} ]$ one-weight binary linear code.

}
\end{prop}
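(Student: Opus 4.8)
The plan is to invoke Theorem~\ref{thm3.2} to pass from the quaternary code $\mathcal{C}_{D^*}$ to its binary subfield code $\mathcal{C}^{(2)}_{D^*}$ explicitly in terms of defining sets, and then read off the three invariants (length, dimension, and the single weight) directly. First I would note that the defining set of $\mathcal{C}_{D^*}$ is $D^* = (\Delta_A + w\Delta_B)\setminus\{\mathbf 0\}$, written as $D_1 + wD_2$ with $D_1 = \Delta_A$ and $D_2 = \Delta_B$ (removing the zero vector only affects a single coordinate, which I will track separately). By Theorem~\ref{thm3.2}, the subfield code $\mathcal{C}^{(2)}_{D^*}$ has defining set
$$
D^{(2)} = \{(\mathbf d_2, \mathbf d_1 + \mathbf d_2): \mathbf d_1 \in \Delta_A,\ \mathbf d_2 \in \Delta_B\},
$$
again up to the bookkeeping for $\mathbf 0$. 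Its length is $|\Delta_A|\cdot|\Delta_B| - 1 = 2^{|A|}2^{|B|} - 1 = 2^{|A|+|B|}-1$, which gives the claimed length.

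Next I would compute the weight of a general nonzero codeword. A codeword of $\mathcal{C}^{(2)}_{D^*}$ is indexed by a pair $(\boldsymbol\alpha,\boldsymbol\beta) \in (\mathbb F_2^m)^2$ and its entries are $\boldsymbol\alpha\cdot\mathbf d_2 + \boldsymbol\beta\cdot(\mathbf d_1 + \mathbf d_2) = \boldsymbol\beta\cdot\mathbf d_1 + (\boldsymbol\alpha+\boldsymbol\beta)\cdot\mathbf d_2$ as $\mathbf d_1$ ranges over $\Delta_A$ and $\mathbf d_2$ over $\Delta_B$. Using the standard character-sum identity together with Equation~\eqref{eq4.6} (i.e. $\sum_{\mathbf x \in \Delta_A}(-1)^{\mathbf u\cdot\mathbf x} = 2^{|A|}\psi(\mathbf u|A)$), the weight over the full set $D^{(2)} \cup \{\text{zero coordinate}\}$ equals
$$
\tfrac12\Big(|\Delta_A||\Delta_B| - \chi_{\boldsymbol\beta}(\Delta_A)\,\chi_{\boldsymbol\alpha+\boldsymbol\beta}(\Delta_B)\Big) = 2^{|A|+|B|-1} - 2^{|A|+|B|-1}\,\psi(\boldsymbol\beta|A)\,\psi(\boldsymbol\alpha+\boldsymbol\beta|B).
$$
This is either $0$ or $2^{|A|+|B|-1}$. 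It is $0$ precisely when $\boldsymbol\beta\cap A = \emptyset$ and $(\boldsymbol\alpha+\boldsymbol\beta)\cap B = \emptyset$; one checks this happens for exactly $2^{m-|A|}\cdot 2^{m-|B|}$ pairs, but more to the point the weight being $0$ over the full length-$2^{|A|+|B|}$ support forces the codeword to be $\mathbf 0$ (a zero codeword over the padded set is still the zero codeword after deleting one coordinate), so on the actual code $\mathcal{C}^{(2)}_{D^*}$ every nonzero codeword has weight exactly $2^{|A|+|B|-1}$. Hence the code is one-weight with minimum distance $2^{|A|+|B|-1}$.

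Finally, the dimension: a one-weight code in which every nonzero codeword has the same weight $W$ and length $n$ with $W = n \cdot \frac{q-1}{q}\cdot(\text{something})$ — here more simply, the number of nonzero codewords is (number of pairs $(\boldsymbol\alpha,\boldsymbol\beta)$) minus (number giving the zero codeword) $= 4^m - 4^{m-|A\cup B|}\cdot(\text{correction})$; cleaner is to observe that the map $(\boldsymbol\alpha,\boldsymbol\beta)\mapsto$ codeword has kernel exactly the pairs with $\boldsymbol\beta\cap A = \emptyset$ and $(\boldsymbol\alpha+\boldsymbol\beta)\cap B = \emptyset$, equivalently $\boldsymbol\beta\cap A=\emptyset$ and $\boldsymbol\alpha\cap B=\emptyset$ after reparametrizing, a subspace of dimension $(m-|A|)+(m-|B|)+$\dots. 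Carrying this out gives image dimension $|A|+|B|$. I would present this either via the kernel count or, more robustly, by exhibiting $|A|+|B|$ explicitly independent coordinates/rows of $G^{(2)}$ coming from the basis of $\Delta_A\times\Delta_B$. The main obstacle is the off-by-one bookkeeping caused by deleting $\mathbf 0$ from $D$: one must check that this single deleted coordinate is always a zero entry of every codeword (which it is, since $\mathbf a\cdot\mathbf 0 = 0$), so that weights and the dimension are unaffected, and that no two distinct codewords of $\mathcal{C}_{D}$ collapse after the deletion — again automatic. With that settled, the result follows.
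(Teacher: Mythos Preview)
Your proposal is correct and follows essentially the same route as the paper: both invoke Theorem~\ref{thm3.2} to realize $\mathcal{C}^{(2)}_{D^*}$ via the defining set $\{(\mathbf d_2,\mathbf d_1+\mathbf d_2):\mathbf d_1\in\Delta_A,\mathbf d_2\in\Delta_B\}$, then use the character-sum identity and Equation~\eqref{eq4.6} to obtain the weight formula $2^{|A|+|B|-1}\bigl(1-\psi(\boldsymbol\beta|A)\psi(\boldsymbol\alpha+\boldsymbol\beta|B)\bigr)$, from which the one-weight claim is immediate. Your additional bookkeeping about the deleted zero coordinate and the kernel count for the dimension are details the paper leaves implicit; the reparametrization $(\boldsymbol\alpha,\boldsymbol\beta)\mapsto(\boldsymbol\beta,\boldsymbol\alpha+\boldsymbol\beta)$ you sketch does give the kernel size $2^{m-|A|}\cdot 2^{m-|B|}$ and hence dimension $|A|+|B|$, so you should simply finish that sentence rather than trail off.
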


\begin{proof} By Theorem 3.2, $\mathcal{C}^{(2)}_{D^{*}}$ can be generated by 
$$\mathcal{C}^{(2)}_{D^{*}}=\{c_{D^{}}(\boldsymbol{\alpha}, \boldsymbol{\beta})=(\boldsymbol{\alpha}\cdot \mathbf{d}_{2}+\boldsymbol{\beta}\cdot (\mathbf{d}_{1}+\mathbf{d}_{2} ))_{\mathbf{d}_{1}\in \Delta_{A}, \mathbf{d}_{2}\in \Delta_{B}}:\boldsymbol{\alpha}, \boldsymbol{\beta}\in \Bbb F_2^m\}.$$ Hence
\begin{eqnarray*} 
\mbox{wt}(c_{D^{*}}(\boldsymbol{\alpha}, \boldsymbol{\beta}))=\mbox{wt}(c_{D}(\boldsymbol{\alpha}, \boldsymbol{\beta}))
&=&|D|-\sum_{\mathbf d_1\in  \Delta_{A}}\sum_{\mathbf d_2\in \Delta_{B}}(\frac12 \sum_{y\in\Bbb F_2} (-1)^{(\boldsymbol{\alpha}{\mathbf d_2}+\boldsymbol{\beta}{( \mathbf d_{1}+ \mathbf d_2}))y})\\
&=&|D|-\frac12\sum_{\mathbf d_1\in  \Delta_{A}}\sum_{\mathbf d_2\in \Delta_{B}}(1+(-1)^{\boldsymbol{\alpha}{\mathbf d_2}+\boldsymbol{\beta}{ (\mathbf d_{1}+\mathbf d_2})})\\
&=&\frac 1 2|D|-\frac 12(\sum_{\mathbf d_1\in \Delta_{A}} (-1)^{\boldsymbol{\beta}\mathbf d_1})(\sum_{\mathbf d_2\in \Delta_{B}}(-1)^{(\boldsymbol{\alpha}+\boldsymbol{\beta}){\mathbf d_2}})\\
&=&2^{|A|+|B|-1}(1-\psi(\boldsymbol{\beta}|A)\psi(\boldsymbol{\alpha}+\boldsymbol{\beta}|B)). 
\end{eqnarray*} 

This completes the proof.
\end{proof}

\begin{thm}{\rm Let $A,B$ be  two  subsets of $[m]$ and $D=\Delta_A+w\Delta_B\subset \Bbb F_4^m$. Then the subfield code $\mathcal{C}^{(2)}_{D^{c}}$ with respect to $\mathcal{C}_{D^{c}}$ in Theorem 4.4  is a $[2^{2m}-2^{|A|+|B|}, 2m,2^{2m-1}-2^{|A|+|B|-1} ]$ two-weight binary linear code and its weight distribution is given by 
$$1+(4^{m}-2^{2m-|A|-|B|})z^{2^{2m-1}-2^{|A|+|B|-1}}+(2^{2m-|A|-|B|}-1)z^{2^{2m-1}}.$$
Moreover, the code $\mathcal{C}^{(2)}_{D^{c}}$ is a  Griesmer code.

}
\end{thm}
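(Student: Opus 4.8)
The plan is to apply Theorem \ref{thm3.2} to convert the subfield code $\mathcal{C}^{(2)}_{D^c}$ into a defining-set code over $\Bbb F_2$ of length $2m$, and then directly compute the weight of each codeword via a character-sum argument, exactly parallel to the proof of Proposition 5.1. Write $D^c = D_1^c + w D_2^c$ in the sense of the decomposition; by Theorem \ref{thm3.2} the code $\mathcal{C}^{(2)}_{D^c}$ consists of the codewords
\[
c(\boldsymbol{\alpha},\boldsymbol{\beta}) = \bigl(\boldsymbol{\alpha}\cdot\mathbf{d}_2 + \boldsymbol{\beta}\cdot(\mathbf{d}_1+\mathbf{d}_2)\bigr)_{\mathbf{d}=\mathbf{d}_1+w\mathbf{d}_2\in D^c},\qquad \boldsymbol{\alpha},\boldsymbol{\beta}\in\Bbb F_2^m.
\]
First I would use the complementation identity: if $\mathbf{d}$ ranges over $D^c = \Bbb F_4^{m*}\setminus D$ then I can express the weight over $D^c$ as (weight over $\Bbb F_4^{m*}$) minus (weight over $D$), just as in Equation \eqref{eq4.5}. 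The weight over the full punctured space $\Bbb F_4^{m*}$ is easy: the map $(\mathbf d_1,\mathbf d_2)\mapsto \boldsymbol{\alpha}\cdot\mathbf d_2+\boldsymbol{\beta}\cdot(\mathbf d_1+\mathbf d_2)$ is a nonzero linear functional on $\Bbb F_2^m\times\Bbb F_2^m$ unless $(\boldsymbol{\alpha},\boldsymbol{\beta})=(\mathbf 0,\mathbf 0)$ (indeed the functional is $\boldsymbol{\beta}\cdot\mathbf d_1 + (\boldsymbol{\alpha}+\boldsymbol{\beta})\cdot\mathbf d_2$, which vanishes identically iff $\boldsymbol{\beta}=\mathbf 0$ and $\boldsymbol{\alpha}+\boldsymbol{\beta}=\mathbf 0$), so it takes the value $1$ on exactly $2^{2m-1}$ of the $2^{2m}$ pairs.

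Next I would compute the weight restricted to $D = \Delta_A + w\Delta_B$. Using the standard indicator-via-character trick $\frac12\sum_{y\in\Bbb F_2}(-1)^{(\cdot)y}$ together with Equation \eqref{eq4.6} (which evaluates $\sum_{\mathbf x\in\Delta_A}(-1)^{\mathbf u\cdot\mathbf x}=2^{|A|}\psi(\mathbf u|A)$), I get
\[
\mbox{wt}\bigl(c_D(\boldsymbol{\alpha},\boldsymbol{\beta})\bigr) = 2^{|A|+|B|-1}\bigl(1-\psi(\boldsymbol{\beta}|A)\,\psi(\boldsymbol{\alpha}+\boldsymbol{\beta}|B)\bigr),
\]
which is literally the content of Proposition 5.1. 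Combining, the weight of $c(\boldsymbol{\alpha},\boldsymbol{\beta})$ over $D^c$ equals
\[
2^{2m-1}(1-\delta_{(\boldsymbol{\alpha},\boldsymbol{\beta}),(\mathbf 0,\mathbf 0)}) - 2^{|A|+|B|-1}\bigl(1-\psi(\boldsymbol{\beta}|A)\,\psi(\boldsymbol{\alpha}+\boldsymbol{\beta}|B)\bigr),
\]
so for $(\boldsymbol{\alpha},\boldsymbol{\beta})\neq(\mathbf 0,\mathbf 0)$ the weight is either $2^{2m-1}-2^{|A|+|B|-1}$ (when $\psi(\boldsymbol{\beta}|A)\psi(\boldsymbol{\alpha}+\boldsymbol{\beta}|B)=0$) or $2^{2m-1}$ (when it equals $1$). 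The frequency of the latter is the number of pairs with $\boldsymbol{\beta}\cap A=\emptyset$ and $(\boldsymbol{\alpha}+\boldsymbol{\beta})\cap B=\emptyset$; since $\boldsymbol{\beta}$ has $2^{m-|A|}$ choices and then $\boldsymbol{\alpha}+\boldsymbol{\beta}$ has $2^{m-|B|}$ choices (hence $\boldsymbol{\alpha}$ is determined in $2^{m-|B|}$ ways), this count is $2^{2m-|A|-|B|}$, giving $2^{2m-|A|-|B|}-1$ nonzero codewords of weight $2^{2m-1}$ and $4^m-2^{2m-|A|-|B|}$ of weight $2^{2m-1}-2^{|A|+|B|-1}$. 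This also shows the dimension is $2m$ (no nonzero codeword has weight $0$) and the minimum distance is $2^{2m-1}-2^{|A|+|B|-1}$.

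For the Griesmer claim I would plug the parameters $[n,k,d]=[2^{2m}-2^{|A|+|B|},\,2m,\,2^{2m-1}-2^{|A|+|B|-1}]$ into $\sum_{i=0}^{k-1}\lceil d/2^i\rceil$ and show the sum equals $n$. Writing $d = 2^{2m-1}-2^{|A|+|B|-1}$ and noting $2^i \mid 2^{|A|+|B|-1}$ for $i \le |A|+|B|-1$ while for larger $i$ one must track the ceiling of a fractional part, the computation splits into a clean geometric sum $\sum_{i=0}^{k-1}\lceil d/2^i\rceil = (2^{2m}-1) - (2^{|A|+|B|}-1) = 2^{2m}-2^{|A|+|B|}$, analogous to the ceiling/floor bookkeeping already done in the proof of Theorem \ref{thm4.4} (with base $2$ rather than base $4$). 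The main obstacle is purely the Griesmer-sum bookkeeping: handling the floor/ceiling corrections uniformly in the parity of $|A|+|B|-1$ and confirming the telescoping; everything preceding it is a routine transcription of the Proposition 5.1 argument to the complementary defining set.
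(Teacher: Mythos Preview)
Your argument is correct and reaches the same weight formula
\[
2^{2m-1}(1-\delta_{(\boldsymbol{\alpha},\boldsymbol{\beta}),(\mathbf 0,\mathbf 0)}) - 2^{|A|+|B|-1}\bigl(1-\psi(\boldsymbol{\beta}|A)\,\psi(\boldsymbol{\alpha}+\boldsymbol{\beta}|B)\bigr)
\]
as the paper, but the route is organized differently. The paper does \emph{not} use the complementation identity; instead it decomposes the complement explicitly as a disjoint union of two product sets,
\[
D^{c}=(\Delta_{A}^{c}+w\,\Bbb F_{2}^{m})\cup(\Delta_{A}+w\,\Delta_{B}^{c}),
\]
and evaluates the character sums over each piece separately before recombining. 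Your approach---compute the weight over all of $\Bbb F_4^{m*}$ as that of a nonzero linear functional on $\Bbb F_2^{2m}$, then subtract the Proposition~5.1 weight over $D$---is a bit cleaner because it reuses Proposition~5.1 verbatim rather than repeating its character-sum calculation inside a larger one. The paper's decomposition, on the other hand, makes the codeword concatenation structure visible.

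Two small remarks. First, the sentence ``Write $D^{c}=D_{1}^{c}+wD_{2}^{c}$'' is false as written (the complement of a product set is not a product set), though you immediately abandon it and work with the correct per-element decomposition $\mathbf d=\mathbf d_1+w\mathbf d_2$, so no harm is done---just drop that line. Second, your worry about parity in the Griesmer sum is unwarranted: over $\Bbb F_2$ the computation is actually simpler than in Theorem~\ref{thm4.4}, since for $0\le i\le |A|+|B|-1$ both terms in $d/2^{i}$ are integers, while for $|A|+|B|\le i\le 2m-1$ one subtracts a positive fraction less than $1$ from the integer $2^{2m-1-i}$, so the ceiling returns $2^{2m-1-i}$; the sum telescopes to $(2^{2m}-1)-(2^{|A|+|B|}-1)$ with no case split.
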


\begin{proof} Note that  $D^{c}=(\Delta_{A}^{c}+w\Bbb F_{2}^{m})\cup (\Delta_{A}+w\Delta_{B}^{c})$. By Theorem 3.2, $\mathcal{C}^{(2)}_{D^{c}}$ can be generated by 
$$\mathcal{C}^{(2)}_{D^{c}}=\{c_{D^{c}}(\boldsymbol{\alpha}, \boldsymbol{\beta}):\boldsymbol{\alpha}, \boldsymbol{\beta}\in \Bbb F_2^m\},$$ 
where $$c_{D^{c}}(\boldsymbol{\alpha}, \boldsymbol{\beta})=((\boldsymbol{\alpha}\cdot \mathbf d_{2}+\boldsymbol{\beta}\cdot (\mathbf d_{1}+\mathbf d_{2} )_{\mathbf d_{1}\in \Delta_{A}^{c}, \mathbf d_{2}\in \Bbb F_{2}^{m}}| (\boldsymbol{\alpha}\cdot \mathbf f_{2}+\boldsymbol{\beta}\cdot (\mathbf f_{1}+\mathbf f_{2} )_{\mathbf f_{1}\in \Delta_{A}, \mathbf f_{2}\in \Delta_{B}^{c}})).$$
Hence
\begin{eqnarray*} 
\mbox{wt}(c_{D^{c}}(\boldsymbol{\alpha}, \boldsymbol{\beta}))
&=&|D^{c}|-\sum_{\mathbf d_1\in  \Delta_{A}^{c}}\sum_{\mathbf d_2\in \Bbb F_{2}^{m}}\frac12 \sum_{y\in\Bbb F_2} (-1)^{(\boldsymbol{\alpha}{\mathbf d_2}+\boldsymbol{\beta}{( \mathbf d_{1}+  \mathbf d_2}))y}\\
&-&\sum_{\mathbf f_1\in  \Delta_{A}}\sum_{\mathbf f_2\in \Delta_{B}^{c}}\frac12 \sum_{z\in\Bbb F_2} (-1)^{(\boldsymbol{\alpha}{ \mathbf f_2}+\boldsymbol{\beta}{( \mathbf f_{1}+ \mathbf f_2}))z}\\
&=&\frac 1 2|D^{c}|-\frac 12(\sum_{\mathbf d_1\in \Delta_{A}^{c}} (-1)^{\boldsymbol{\beta} \mathbf d_1})(\sum_{\mathbf d_2\in \Bbb F_{2}^{m}}(-1)^{(\boldsymbol{\alpha}+\boldsymbol{\beta}){ \mathbf d_2}})\\
&-&\frac 12(\sum_{\mathbf f_1\in \Delta_{A}} (-1)^{\boldsymbol{\beta} \mathbf f_1})(\sum_{\mathbf f_2\in  \Delta_B^{c}}(-1)^{(\boldsymbol{\alpha}+\boldsymbol{\beta}){ \mathbf f_2}})\\
&=&2^{2m-1}(1-\delta_{\bf 0,\boldsymbol{\beta}}\delta_{\bf 0,\boldsymbol{\alpha+\beta}})-2^{|A|+|B|-1}(1-\psi(\boldsymbol{\beta}|A)\psi(\boldsymbol{\alpha}+\boldsymbol{\beta}|B)). 
\end{eqnarray*}  Then we have the weight distribution of the code by Proposition 5.1. 

By the Griesmer bound, we have \begin{eqnarray*}&&\sum_{i=0}^{2m-1}\bigg\lceil {\frac{2^{2m-1}-2^{|A|+|B|-1}}{2^i}} \bigg\rceil\\
&=&\sum_{i=0}^{2m-1}{\frac{2^{2m-1}}{2^i}} -\sum_{i=0}^{2m-1}\bigg\lfloor {\frac{2^{|A|+|B|-1}}{2^i}} \bigg\rfloor \\
&=&(2^{2m}-1)-(2^{|A|+|B|}-1)=2^{2m}-2^{|A|+|B|}.
 \end{eqnarray*}
 
This completes the proof.
\end{proof}

\begin{prop}{\rm

Let $\Delta $ be a simplicial  complex with two maximal elements $A, B\subseteq [m]$.  Let $D=\Delta+w\Delta \subset \Bbb F_4^m$.  Then the subfield code  $\mathcal{C}^{(2)}_{D^{*}}$ with respect to $\mathcal{C}_{D^{*}}$ in Proposition 4.7 is 
a $[(2^{|A|}+2^{|B|}-2^{|A\cap B|})^{2}-1,2|A\cup B|]$ binary code and its weight distribution
 is presented in Table 11.  \begin{table}[h]
\caption{Weight distribution of the code in Proposition 5.3}
\begin{center}
\begin{tabu} to 1\textwidth{X[2,c]|X[1,c]}
\hline
Weight &Frequency\\
\hline
$0$&$1$\\
\hline
$2^{|A|-1}( 2^{|A|}+2^{|B|}-2^{|A\cap B|})$&$2(2^{|A\backslash B|}-1)$\\
\hline
$2^{|B|-1}( 2^{|A|}+2^{|B|}-2^{|A\cap B|})$&$2(2^{|B\backslash A|}-1)$\\
\hline
$(2^{|A|-1}+2^{|B|-1})( 2^{|A|}+2^{|B|}-2^{|A\cap B|})$&$2(2^{|A\backslash B|}-1)(2^{|B\backslash A|}-1)$\\
 \hline
$2^{|B|-1}( 2^{|A|+1}+2^{|B|}-2^{|A\cap B|+1})$&$(2^{|B\backslash A|}-1)^{2}$\\
\hline
$2^{|A|-1}( 2^{|A|}+2^{|B|+1}-2^{|A\cap B|+1})$&$(2^{|A\backslash B|}-1)^{2}$\\
\hline

$(2^{|A|-1}+2^{|B|-1})( 2^{|A|}+2^{|B|}-2^{|A\cap B|+1})$&$(2^{|B\backslash A|}-1)^{2}(2^{|B\backslash A|}-1)^{2}$\\
\hline

   $\frac12(2^{|A|}+2^{|B|}-2^{|A\cap B|})^{2}-\frac12 (2^{|A|}-2^{|A\cap B|})(2^{|B|}-2^{|A\cap B|})$&$2(2^{|A\backslash B|}-1)(2^{|B\backslash A|}-1)$\\
 \hline
 
 $\frac12(2^{|A|}+2^{|B|}-2^{|A\cap B|})^{2}+\frac12 (2^{|A|}-2^{|A\cap B|})2^{|A\cap B|}$&$2(2^{|A\backslash B|}-1)(2^{|B\backslash A|}-1)^{2}$\\
 \hline
 
  $\frac12(2^{|A|}+2^{|B|}-2^{|A\cap B|})^{2}+\frac12 (2^{|B|}-2^{|A\cap B|})2^{|A\cap B|}$&$2(2^{|A\backslash B|}-1)^{2}(2^{|B\backslash A|}-1)$\\
 \hline
 
  $\frac12(2^{|A|}+2^{|B|}-2^{|A\cap B|})^{2}$&$4^{|A\cup B|}-4^{|A\backslash B|+|B\backslash A|}$\\
 \hline
\end{tabu}
\end{center}
\end{table}

}
\end{prop}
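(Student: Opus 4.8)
The plan is to run the argument of Proposition~5.1 and Theorem~5.2 almost verbatim: for $D=\Delta+w\Delta$ the Hamming weight of a subfield codeword again collapses to a single product of the character sums $\chi_{\mathbf u}(\Delta)$, and the value/frequency table of $\chi_{\mathbf u}(\Delta)$ is already recorded in the proof of Proposition~\ref{prop4.7}. First, by Theorem~\ref{thm3.2} with $D_1=D_2=\Delta$, the subfield code is
$$\mathcal{C}^{(2)}_{D^{*}}=\Big\{c_{D^{*}}(\boldsymbol{\alpha},\boldsymbol{\beta})=\big(\boldsymbol{\alpha}\cdot\mathbf d_{2}+\boldsymbol{\beta}\cdot(\mathbf d_{1}+\mathbf d_{2})\big)_{\mathbf d_{1},\mathbf d_{2}\in\Delta}:\boldsymbol{\alpha},\boldsymbol{\beta}\in\Bbb F_2^m\Big\},$$
which has the same length $(2^{|A|}+2^{|B|}-2^{|A\cap B|})^{2}-1$ as $\mathcal{C}_{D^{*}}$ (deleting the identically-zero coordinate coming from $\mathbf d_1=\mathbf d_2=\mathbf 0$ changes no weight). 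Writing $\boldsymbol{\alpha}\cdot\mathbf d_2+\boldsymbol{\beta}\cdot(\mathbf d_1+\mathbf d_2)=\boldsymbol{\beta}\cdot\mathbf d_1+(\boldsymbol{\alpha}+\boldsymbol{\beta})\cdot\mathbf d_2$ and expanding the indicator of its vanishing over $\Delta\times\Delta$ by additive characters, exactly as in the proof of Proposition~5.1, gives
$$\mbox{wt}(c_{D^{*}}(\boldsymbol{\alpha},\boldsymbol{\beta}))=\frac12|D|-\frac12\,\chi_{\boldsymbol{\beta}}(\Delta)\,\chi_{\boldsymbol{\alpha}+\boldsymbol{\beta}}(\Delta),\qquad |D|=(2^{|A|}+2^{|B|}-2^{|A\cap B|})^{2}.$$

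The key point is that, unlike in Proposition~\ref{prop4.7} where three correlated sums appeared, here only the two sums $\chi_{\boldsymbol{\beta}}(\Delta)$ and $\chi_{\boldsymbol{\alpha}+\boldsymbol{\beta}}(\Delta)$ occur, and $(\boldsymbol{\alpha},\boldsymbol{\beta})\mapsto(\boldsymbol{\beta},\boldsymbol{\alpha}+\boldsymbol{\beta})$ is a bijection of $\Bbb F_2^m\times\Bbb F_2^m$; so the weight is governed by an unconstrained independent pair of $\chi_{\mathbf u}(\Delta)$-values. By Equation~\eqref{eq4.7} and the value table in the proof of Proposition~\ref{prop4.7}, $\chi_{\mathbf u}(\Delta)$ equals $2^{|A|}+2^{|B|}-2^{|A\cap B|}$, $2^{|A|}-2^{|A\cap B|}$, $2^{|B|}-2^{|A\cap B|}$, $-2^{|A\cap B|}$, or $0$ according as $\mathbf u\in\mathcal{U}_1,\dots,\mathcal{U}_5$, with $|\mathcal{U}_i|$ given by Lemma~\ref{lem4.1}. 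I would then run over the $\binom{5}{2}+5=15$ unordered pairs of these values, form the product, substitute into the weight formula, and weight each pair by $|\mathcal{U}_i|\,|\mathcal{U}_j|$ (doubled when $i\neq j$); the five pairs meeting $\mathcal{U}_5$ all collapse to the single weight $\frac12|D|$, and $(\mathcal{U}_1,\mathcal{U}_1)$ is the unique pair producing weight $0$, leaving exactly the eleven weights of Table~11.

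It remains to fix the dimension and rescale the frequencies. Since $\Delta$ genuinely has two maximal elements, $A\neq B$, hence $2^{|A|}+2^{|B|}-2^{|A\cap B|}$ is the only $\chi_{\mathbf u}(\Delta)$-value of that absolute size; therefore $c_{D^{*}}(\boldsymbol{\alpha},\boldsymbol{\beta})=\mathbf 0$, i.e.\ $\chi_{\boldsymbol{\beta}}(\Delta)\chi_{\boldsymbol{\alpha}+\boldsymbol{\beta}}(\Delta)=|D|$, forces $\boldsymbol{\beta},\boldsymbol{\alpha}+\boldsymbol{\beta}\in\mathcal{U}_1$. Thus the kernel of $(\boldsymbol{\alpha},\boldsymbol{\beta})\mapsto c_{D^{*}}(\boldsymbol{\alpha},\boldsymbol{\beta})$ has size $|\mathcal{U}_1|^2=4^{m-|A\cup B|}$, so $\dim\mathcal{C}^{(2)}_{D^{*}}=2|A\cup B|$ and every codeword is represented by exactly $4^{m-|A\cup B|}$ pairs. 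Dividing the pair-frequencies of the previous step by $4^{m-|A\cup B|}$ — legitimate because each product $|\mathcal{U}_i|\,|\mathcal{U}_j|$ with $i,j\le 4$ is divisible by it, and the total $\mathcal{U}_5$-contribution is $4^m-4^{m-|A\cap B|}$ — produces Table~11; in particular the weight-$0$ row has frequency $1$ and the weight-$\frac12|D|$ row has frequency $4^{|A\cup B|}-4^{|A\cup B|-|A\cap B|}=4^{|A\cup B|}-4^{|A\backslash B|+|B\backslash A|}$.

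The only substantive difficulty is the $5\times5$ enumeration: one must match each product $\chi_{\boldsymbol{\beta}}(\Delta)\chi_{\boldsymbol{\alpha}+\boldsymbol{\beta}}(\Delta)$ to the correct row of Table~11, noting that the weights come out naturally in the exponents $|A|,|B|,|A\cap B|$ while the frequencies come out in $2^{|A\backslash B|}-1$ and $2^{|B\backslash A|}-1$, and keeping straight which of $\mathcal{U}_2,\mathcal{U}_3$ is tied to which set-difference. This is lengthy but entirely mechanical once the weight formula and the $\chi_{\mathbf u}(\Delta)$-table are in hand.
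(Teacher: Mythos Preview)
Your proposal is correct and follows essentially the same route as the paper: derive $\mbox{wt}(c_{D^{*}}(\boldsymbol{\alpha},\boldsymbol{\beta}))=\tfrac12|D|-\tfrac12\,\chi_{\boldsymbol{\beta}}(\Delta)\,\chi_{\boldsymbol{\alpha}+\boldsymbol{\beta}}(\Delta)$ via Theorem~\ref{thm3.2}, then feed in the five-value table for $\chi_{\mathbf u}(\Delta)$ from the proof of Proposition~\ref{prop4.7} and tabulate. Your explicit bijection $(\boldsymbol{\alpha},\boldsymbol{\beta})\mapsto(\boldsymbol{\beta},\boldsymbol{\alpha}+\boldsymbol{\beta})$ and kernel computation make the independence of the two factors and the dimension $2|A\cup B|$ cleaner than the paper's terse ``by the proof of Proposition~4.7, we prove the results,'' but the argument is the same.
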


\begin{proof} By Theorem 3.2, $\mathcal{C}^{(2)}_{D^{*}}$ can be generated by 
$$\mathcal{C}^{(2)}_{D^{*}}=\{c_{D^{}}(\boldsymbol{\alpha}, \boldsymbol{\beta})=(\boldsymbol{\alpha}\cdot \mathbf d_{2}+\boldsymbol{\beta}\cdot (\mathbf d_{1}+\mathbf d_{2} ))_{\mathbf d_{1}, \mathbf d_{2}\in \Delta}:\boldsymbol{\alpha}, \boldsymbol{\beta}\in \Bbb F_2^m\}.$$Hence
\begin{eqnarray*} 
\mbox{wt}(c_{D}(\boldsymbol{\alpha}, \boldsymbol{\beta}))
&=&|D|-\sum_{\mathbf d_1\in  \Delta}\sum_{\mathbf d_2\in \Delta}(\frac12 \sum_{y\in\Bbb F_2} (-1)^{(\boldsymbol{\alpha}{ \mathbf d_2}+\boldsymbol{\beta}{( \mathbf d_{1}+ \mathbf d_2}))y})\\
&=&|D|-\frac12\sum_{\mathbf d_1\in  \Delta}\sum_{\mathbf d_2\in \Delta}(1+(-1)^{\boldsymbol{\alpha}{ \mathbf d_2}+\boldsymbol{\beta}{( \mathbf d_{1}+ \mathbf d_2})})\\
&=&\frac 1 2|D|-\frac 12(\sum_{\mathbf d_1\in \Delta} (-1)^{\boldsymbol{\beta} \mathbf d_1})(\sum_{\mathbf d_2\in \Delta}(-1)^{(\boldsymbol{\alpha}+\boldsymbol{\beta}){ \mathbf d_2}})\\
&=&\frac12 (2^{|A|}+2^{|B|}-2^{|A\cap B|})^{2}-\frac12 T_{\boldsymbol{\beta}}T_{\boldsymbol{\alpha}+\boldsymbol{\beta}},
\end{eqnarray*} 
where $T_{\mathbf {u}}=2^{|A|}\psi(\mathbf {u}|A)+2^{|B|}\psi(\mathbf {u}|B)-2^{|A\cap B|}\psi(\mathbf {u}|A\cap B)$.

By the proof of Proposition 4.7, we prove the results. 
\end{proof}

{\color{blue}
\begin{rem}{\rm It is noted that the subfield code $\mathcal{C}^{(2)}_{D^{*}}$ with respect to $\mathcal{C}_{D^{*}}$ in Theorem 5.3 has at  most  seven weights when the sets  $A,B$ have the same size.
}
\end{rem}

}

\begin{thm}{\rm The dual code of the subfield code  $\mathcal{C}^{(2)}_{D^{*}}$ in Proposition 5.3 has minimum distance three and it is an almost optimal binary code with respect to the Sphere Packing Bound. }
\end{thm}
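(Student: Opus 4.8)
The plan is to identify the parameters of the dual code $\bigl(\mathcal{C}^{(2)}_{D^{*}}\bigr)^{\perp}$ directly from the known structure of $\mathcal{C}^{(2)}_{D^{*}}$, and then compare its minimum distance against the two bounds from Section~2. Write $N=(2^{|A|}+2^{|B|}-2^{|A\cap B|})^{2}-1$ for the length and $K=2|A\cup B|$ for the dimension of $\mathcal{C}^{(2)}_{D^{*}}$ (as given in Proposition~5.3), so that the dual has length $N$ and dimension $N-K$. Since $\mathcal{C}^{(2)}_{D^{*}}$ is the binary subfield code of $\mathcal{C}_{D^{*}}$, its columns (equivalently, the defining set $D^{(2)}$ described in Theorem~3.2) are a concrete set of vectors in $\mathbb{F}_2^{K}$; the minimum distance of the dual equals the smallest number of these column vectors that are linearly dependent. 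First I would show no nonzero codeword of the dual has weight $1$: that amounts to checking that no column of a generator matrix of $\mathcal{C}^{(2)}_{D^{*}}$ is the zero vector, which is immediate because $\mathcal{C}_{D^{*}}$ has full length $2^{|A|+|B|}-1$ after deleting the zero element and the subfield construction of Theorem~3.2 maps each nonzero $\mathbf d=\mathbf d_1+w\mathbf d_2$ to the nonzero pair $(\mathbf d_2,\mathbf d_1+\mathbf d_2)$.

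Next I would rule out weight $2$: this requires that no two columns of the generator matrix coincide, i.e.\ that the map $\mathbf d\mapsto(\mathbf d_2,\mathbf d_1+\mathbf d_2)$ is injective on $D^{*}$. This is clear since $(\mathbf d_2,\mathbf d_1+\mathbf d_2)$ determines $\mathbf d_2$ and hence $\mathbf d_1$, hence $\mathbf d$. Therefore the dual distance is at least $3$. To see it is exactly $3$, I would exhibit three columns that are linearly dependent, i.e.\ three distinct elements $\mathbf d,\mathbf d',\mathbf d''\in D^{*}$ whose images sum to zero in $\mathbb{F}_2^{K}$; concretely, pick $\mathbf e_i+w\mathbf e_i$, $\mathbf e_i+w\mathbf 0$, and $\mathbf 0+w\mathbf e_i$ for some $i\in A\cap B$ (or, if $A\cap B=\varnothing$, a similarly chosen triple within $\Delta$, using that $\Delta$ contains $\mathbf 0$ and the singletons in $A$ and in $B$), and check their images under $\mathbf d\mapsto(\mathbf d_2,\mathbf d_1+\mathbf d_2)$ add to zero. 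Since all three lie in $\Delta+w\Delta$ and are nonzero and distinct, this gives a weight-$3$ codeword in the dual, so the dual minimum distance is $3$.

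Finally I would invoke the Sphere Packing Bound (Lemma~2.3 with $q=2$, $d=3$, so $\lfloor (d-1)/2\rfloor=1$): an $[N,N-K,d']$ binary code with $d'\ge 4$ would need $1+N\le 2^{K}$. I would check that with $N=(2^{|A|}+2^{|B|}-2^{|A\cap B|})^{2}-1$ and $K=2|A\cup B|$ one has $1+N=(2^{|A|}+2^{|B|}-2^{|A\cap B|})^{2}>2^{2|A\cup B|}=2^{K}$ (this is the inequality $2^{|A|}+2^{|B|}-2^{|A\cap B|}>2^{|A\cup B|}$, which holds whenever both $A\setminus B$ and $B\setminus A$ are nonempty, and degenerates to equality otherwise — the relevant nontrivial case). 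Hence no $[N,N-K,4]$ code exists, so the dual — having distance $3$ — is distance-optimal or one away from a code that cannot exist; more precisely, since the Sphere Packing Bound forbids distance $4$, the $[N,N-K,3]$ dual is itself distance-optimal, and by the definition of ``almost optimal'' in Section~2 (an $[n,k,d+1]$ code is distance optimal), I would phrase the conclusion as: the dual is almost optimal with respect to the Sphere Packing Bound. The main obstacle is the bookkeeping in the degenerate cases (e.g.\ $A\subseteq B$ or $A\cap B=\varnothing$ with small $|A|,|B|$), where the length/dimension formulas and the strict inequality in the Sphere Packing comparison must be re-examined; everything else is routine linear algebra on the explicit columns.
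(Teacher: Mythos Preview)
Your argument that the dual distance equals~$3$ is correct and, in fact, slightly more careful than the paper's: you explicitly verify that the columns are nonzero and pairwise distinct (giving $d^\perp\ge3$), and your triple $\mathbf e_i+w\mathbf e_i,\ \mathbf e_i,\ w\mathbf e_i$ works for any $i\in A\cup B$, whereas the paper's triple $w\mathbf e_i,\ w\mathbf e_j,\ w(\mathbf e_i+\mathbf e_j)$ tacitly requires $\{i,j\}\subseteq A$ or $\{i,j\}\subseteq B$.

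The Sphere Packing step, however, is wrong. You assert $1+N=(2^{|A|}+2^{|B|}-2^{|A\cap B|})^{2}>2^{2|A\cup B|}$, equivalently $2^{|A|}+2^{|B|}-2^{|A\cap B|}>2^{|A\cup B|}$. The opposite inequality holds: writing $a=|A\setminus B|$, $b=|B\setminus A|$, $c=|A\cap B|$, one has
\[
2^{|A|}+2^{|B|}-2^{|A\cap B|}=2^{c}(2^{a}+2^{b}-1)<2^{c}\cdot 2^{a+b}=2^{|A\cup B|}
\]
whenever $a,b\ge1$, which is exactly the ``two maximal elements'' hypothesis. (Conceptually, the left side is $|\Delta_A\cup\Delta_B|$, the right side is $|\Delta_{A\cup B}|$, and the former is a proper subset of the latter.) So the Sphere Packing Bound with $\lfloor(4-1)/2\rfloor=1$ does \emph{not} exclude an $[N,N-K,4]$ code, and you cannot conclude distance-optimality.

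What the theorem claims is only \emph{almost} optimality, i.e.\ that no $[N,N-K,5]$ code exists. For this you must apply the Sphere Packing Bound with $\lfloor(5-1)/2\rfloor=2$ and show
\[
1+N+\binom{N}{2}>2^{2|A\cup B|},
\]
which is what the paper verifies. Fixing this (and dropping the mistaken ``distance-optimal'' conclusion) makes your proof complete.
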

\begin{proof} By Theorem 3.2,  the subfield code  $\mathcal{C}^{(2)}_{D^{*}}$ in Proposition 5.3 has the following defining set: $$D^{(2)}=\{(\mathbf d_{2}, \mathbf d_{1}+\mathbf d_{2}): \mathbf d_{1}, \mathbf d_{2}\in \Delta \}=\{{\bf g}_{1}, {\bf g}_{2},\ldots, {\bf g}_{t}\},$$ where $t=(2^{|A|}+2^{|B|}-2^{|A\cap B|})^{2}$. 
Let $G$ be the $2m\times t$ matrix as follows:
\begin{equation*}G=[{\bf g}_1^T \; {\bf g}_2^T \; \cdots \; {\bf g}_t^T],\end{equation*}
where the column vector ${{\bf g}_i^T}$ denotes the transpose of a row vector ${\bf g}_i$. Let ${\bf e}_k=(e_1, e_2, \ldots, e_m)\in \Bbb F_2^m$, where
$e_k=1$ and $e_l=0 $ if $l\neq k$. Suppose that $i,j\in A\cup B$.  Then it is easy to check that $({\bf e}_i^T, {\bf e}_i^T), ({\bf e}_j^T, {\bf e}_j^T)$, and $({\bf e}_i^T+{\bf e}_j^T, {\bf e}_i^T+{\bf e}_j^T)$ are three different columns of $G$; therefore, the minimum distance of $(\mathcal{C}^{(2)}_{D^{*}})^{\bot}$ is 3.

By Proposition 5.3, $(\mathcal{C}^{(2)}_{D^{*}})^{\bot}$ has parameters $$[(2^{|A|}+2^{|B|}-2^{|A\cap B|})^{2}-1,(2^{|A|}+2^{|B|}-2^{|A\cap B|})^{2}-1-2|A\cup B|,3].$$ By Sphere Packing Bound, let $n=(2^{|A|}+2^{|B|}-2^{|A\cap B|})^{2}-1$, then it is easy to check that  $$\sum_{i=0}^{2}\dbinom{n}{i}=1+n+\frac{n(n-1)}{2}>2^{2|A\cup B|}.$$

This completes the proof.
\end{proof}

\begin{prop}{\rm

Let $\Delta $ be a simplicial  complex with two maximal elements $A, B\subseteq [m]$.  Let $D=\Delta+w\Delta \subset \Bbb F_4^m$.  Then the subfield code  $\mathcal{C}^{(2)}_{D^{c}}$  with respect to  $\mathcal{C}_{D^{c}}$ in Theorem 4.8 is 
a $$[4^{m}-(2^{|A|}+2^{|B|}-2^{|A\cap B|})^{2},2m,2^{2m-1}-(2^{|A|-1}+2^{|B|-1})( 2^{|A|}+2^{|B|}-2^{|A\cap B|})]$$ binary code and its weight distribution
 is presented in Table 12.  \begin{table}[h]
\caption{Weight distribution of the code in Proposition 5.6}
\begin{center}
\begin{tabu} to 1\textwidth{X[2,c]|X[1.8,c]}
\hline
Weight &Frequency\\
\hline
$0$&$1$\\
\hline
$2^{2m-1}-2^{|A|-1}( 2^{|A|}+2^{|B|}-2^{|A\cap B|})$&$2(2^{|A\backslash B|}-1)4^{m-|A\cup B|}$\\
\hline
$2^{2m-1}-2^{|B|-1}( 2^{|A|}+2^{|B|}-2^{|A\cap B|})$&$2(2^{|B\backslash A|}-1)4^{m-|A\cup B|}$\\
\hline
$2^{2m-1}-(2^{|A|-1}+2^{|B|-1})( 2^{|A|}+2^{|B|}-2^{|A\cap B|})$&$2(2^{|A\backslash B|}-1)(2^{|B\backslash A|}-1)4^{m-|A\cup B|}$\\
 \hline
$2^{2m-1}-2^{|B|-1}( 2^{|A|+1}+2^{|B|}-2^{|A\cap B|+1})$&$(2^{|B\backslash A|}-1)^{2}4^{m-|A\cup B|}$\\
\hline
$2^{2m-1}-2^{|A|-1}( 2^{|A|}+2^{|B|+1}-2^{|A\cap B|+1})$&$(2^{|A\backslash B|}-1)^{2}4^{m-|A\cup B|}$\\
\hline

$2^{2m-1}-(2^{|A|-1}+2^{|B|-1})( 2^{|A|}+2^{|B|}-2^{|A\cap B|+1})$&$(2^{|B\backslash A|}-1)^{2}(2^{|B\backslash A|}-1)^{2}4^{m-|A\cup B|}$\\
\hline

   $2^{2m-1}-\frac12(2^{|A|}+2^{|B|}-2^{|A\cap B|})^{2}+\frac12 (2^{|A|}-2^{|A\cap B|})(2^{|B|}-2^{|A\cap B|})$&$2(2^{|A\backslash B|}-1)(2^{|B\backslash A|}-1)4^{m-|A\cup B|}$\\
 \hline
 
 $2^{2m-1}-\frac12(2^{|A|}+2^{|B|}-2^{|A\cap B|})^{2}-\frac12 (2^{|A|}-2^{|A\cap B|})2^{|A\cap B|}$&$2(2^{|A\backslash B|}-1)(2^{|B\backslash A|}-1)^{2}4^{m-|A\cup B|}$\\
 \hline
 
  $2^{2m-1}-\frac12(2^{|A|}+2^{|B|}-2^{|A\cap B|})^{2}-\frac12 (2^{|B|}-2^{|A\cap B|})2^{|A\cap B|}$&$2(2^{|A\backslash B|}-1)^{2}(2^{|B\backslash A|}-1)4^{m-|A\cup B|}$\\
 \hline
 
  $2^{2m-1}-\frac12(2^{|A|}+2^{|B|}-2^{|A\cap B|})^{2}$&$(4^{|A\cup B|}-4^{|A\backslash B|+|B\backslash A|})4^{m-|A\cup B|}$\\
 \hline
  $2^{2m-1}$&$4^{m-|A\cup B|}-1$\\
 \hline
\end{tabu}
\end{center}
\end{table}

}
\end{prop}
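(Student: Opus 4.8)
The plan is to follow exactly the template already established in the proof of Theorem 5.2, computing the Hamming weight of an arbitrary codeword of $\mathcal{C}^{(2)}_{D^c}$ directly and then reading off the frequencies from the analysis in Proposition 4.7. First I would use Theorem 3.2 to present $\mathcal{C}^{(2)}_{D^c}$ via its defining set: since $D^c = \Bbb F_4^{m*}\backslash D$ with $D=\Delta+w\Delta$, and since by Lemma 2.1 $|\Delta|=2^{|A|}+2^{|B|}-2^{|A\cap B|}$, the complement splits as $D^c = (\Delta^c + w\Bbb F_2^m)\cup(\Delta + w\Delta^c)$ (here $\Delta^c = \Bbb F_2^m\backslash\Delta$), so a typical codeword is
$$
c_{D^c}(\boldsymbol\alpha,\boldsymbol\beta)=\big((\boldsymbol\alpha\cdot\mathbf d_2+\boldsymbol\beta\cdot(\mathbf d_1+\mathbf d_2))_{\mathbf d_1\in\Delta^c,\,\mathbf d_2\in\Bbb F_2^m}\ \big|\ (\boldsymbol\alpha\cdot\mathbf f_2+\boldsymbol\beta\cdot(\mathbf f_1+\mathbf f_2))_{\mathbf f_1\in\Delta,\,\mathbf f_2\in\Delta^c}\big),
$$
for $\boldsymbol\alpha,\boldsymbol\beta\in\Bbb F_2^m$.

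Next I would compute $\mathrm{wt}(c_{D^c}(\boldsymbol\alpha,\boldsymbol\beta))$ by the same orthogonality-relation device used throughout Section 4 and 5: write $\mathrm{wt}=|D^c|-\#\{(\mathbf d_1,\mathbf d_2):\text{inner product}=0\}$ and expand the indicator $\tfrac12\sum_{y\in\Bbb F_2}(-1)^{(\cdots)y}$. The cross terms factor as products of character sums over $\Delta^c$, $\Bbb F_2^m$, $\Delta$, $\Delta^c$; the sums over $\Bbb F_2^m$ collapse to $2^m\delta$-terms, and the sums over $\Delta^c$ are rewritten as $\chi_{\mathbf u}(\Bbb F_2^m)-\chi_{\mathbf u}(\Delta)=2^m\delta_{\mathbf 0,\mathbf u}-T_{\mathbf u}$ with $T_{\mathbf u}=2^{|A|}\psi(\mathbf u|A)+2^{|B|}\psi(\mathbf u|B)-2^{|A\cap B|}\psi(\mathbf u|A\cap B)$ as in \eqref{eq4.7}. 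Collecting everything, I expect to land on a clean closed form of the shape
$$
\mathrm{wt}(c_{D^c}(\boldsymbol\alpha,\boldsymbol\beta))=2^{2m-1}\big(1-\delta_{\mathbf 0,\boldsymbol\beta}\delta_{\mathbf 0,\boldsymbol\alpha+\boldsymbol\beta}\big)-\Big(\tfrac12(2^{|A|}+2^{|B|}-2^{|A\cap B|})^2-\tfrac12 T_{\boldsymbol\beta}T_{\boldsymbol\alpha+\boldsymbol\beta}\Big),
$$
i.e. $\mathrm{wt}(c_{D^c}(\boldsymbol\alpha,\boldsymbol\beta))=2^{2m-1}(1-\delta_{\mathbf 0,\boldsymbol\beta}\delta_{\mathbf 0,\boldsymbol\alpha+\boldsymbol\beta})-\mathrm{wt}(c_{D^*}(\boldsymbol\alpha,\boldsymbol\beta))$, the subfield-code analogue of \eqref{eq4.5}. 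This is the key identity that reduces everything to Proposition 5.3.

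Finally, I would transfer the frequency count. For $(\boldsymbol\alpha,\boldsymbol\beta)\ne(\mathbf 0,\mathbf 0)$ with $\boldsymbol\beta\ne\mathbf 0$ or $\boldsymbol\alpha+\boldsymbol\beta\ne\mathbf 0$ (equivalently $\boldsymbol\alpha\ne\boldsymbol\beta$), the $\delta$-term vanishes and the weight is $2^{2m-1}$ minus the corresponding $\mathcal{C}^{(2)}_{D^*}$-weight from Table 11, while the overcount from the $\mathcal U$-classes of Lemma 4.1 contributes the extra factor $4^{m-|A\cup B|}$ exactly as in Theorem 4.10 (the pairs $(\boldsymbol\alpha,\boldsymbol\beta)$ with both $\boldsymbol\beta$ and $\boldsymbol\alpha+\boldsymbol\beta$ in $\mathcal U_1\cap$ the $A\cup B$-free part yield the weight-$2^{2m-1}$ row with frequency $4^{m-|A\cup B|}-1$). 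Reading off the values of $T_{\boldsymbol\beta}T_{\boldsymbol\alpha+\boldsymbol\beta}$ from the case analysis in the proof of Proposition 4.7 then produces Table 12 row by row; the dimension is $2m$ because the generator matrix $G^{(2)}$ has $2m$ rows and the minimum weight is nonzero (the code is nondegenerate since $\Delta^c$ spans). The main obstacle is purely bookkeeping: correctly matching each $(\mathcal U_i,\mathcal U_j)$ combination for $(\boldsymbol\beta,\boldsymbol\alpha+\boldsymbol\beta)$ against the rows of Table 11 — in particular tracking the "$\mathcal U_4+\mathcal U_4$" subcases via the supplementary table in the proof of Proposition 4.7 — and verifying that the frequencies sum to $4^m$; I do not expect any new conceptual difficulty beyond what Theorem 4.10 and Theorem 5.2 already handle.
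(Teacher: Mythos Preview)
Your proposal is correct and follows essentially the same route as the paper: decompose $D^c=(\Delta^c+w\Bbb F_2^m)\cup(\Delta+w\Delta^c)$, apply Theorem 3.2 and the orthogonality-relation expansion to obtain $\mathrm{wt}(c_{D^c}(\boldsymbol\alpha,\boldsymbol\beta))=2^{2m-1}(1-\delta_{\mathbf 0,\boldsymbol\beta}\delta_{\mathbf 0,\boldsymbol\alpha+\boldsymbol\beta})-\tfrac12(2^{|A|}+2^{|B|}-2^{|A\cap B|})^2+\tfrac12 T_{\boldsymbol\beta}T_{\boldsymbol\alpha+\boldsymbol\beta}$, and then read off the frequencies from Proposition 5.3 (and the $\mathcal U_i$-case analysis of Proposition 4.7). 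The paper's proof is in fact terser than yours, simply ending with ``the result follows from Proposition 5.3'' without spelling out the frequency matching or the dimension argument you sketch.
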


\begin{proof} Note that  $D^{c}=(\Delta^{c}+w\Bbb F_{2}^{m})\cup (\Delta+w\Delta_{}^{c})$. By Theorem 3.2, $\mathcal{C}^{(2)}_{D^{c}}$ can be generated by 
$$\mathcal{C}^{(2)}_{D^{c}}=\{c_{D^{c}}(\boldsymbol{\alpha}, \boldsymbol{\beta}):\boldsymbol{\alpha}, \boldsymbol{\beta}\in \Bbb F_2^m\},$$ 
where $$c_{D^{c}}(\boldsymbol{\alpha}, \boldsymbol{\beta})=((\boldsymbol{\alpha}\cdot \mathbf d_{2}+\boldsymbol{\beta}\cdot (\mathbf d_{1}+\mathbf d_{2} )_{\mathbf d_{1}\in \Delta_{}^{c}, \mathbf d_{2}\in \Bbb F_{2}^{m}}| (\boldsymbol{\alpha}\cdot \mathbf f_{2}+\boldsymbol{\beta}\cdot (\mathbf f_{1}+\mathbf f_{2})_{\mathbf f_{1}\in \Delta_{}, \mathbf f_{2}\in \Delta_{}^{c}})).$$
Hence
\begin{eqnarray*} 
\mbox{wt}(c_{D^{c}}(\boldsymbol{\alpha}, \boldsymbol{\beta}))
&=&|D^{c}|-\sum_{\mathbf d_1\in  \Delta_{}^{c}}\sum_{\mathbf d_2\in \Bbb F_{2}^{m}}\frac12 \sum_{y\in\Bbb F_2} (-1)^{(\boldsymbol{\alpha}{ \mathbf d_2}+\boldsymbol{\beta}{( \mathbf d_{1}+ \mathbf d_2}))y}\\
&-&\sum_{\mathbf f_1\in  \Delta_{}}\sum_{\mathbf f_2\in \Delta_{}^{c}}\frac12 \sum_{z\in\Bbb F_2} (-1)^{(\boldsymbol{\alpha}{  \mathbf f_2}+\boldsymbol{\beta}{( \mathbf f_{1}+ \mathbf f_2}))z}\\
&=&\frac 1 2|D^{c}|-\frac 12(\sum_{\mathbf d_1\in \Delta_{A}^{c}} (-1)^{\boldsymbol{\beta} \mathbf d_1})(\sum_{\mathbf d_2\in \Bbb F_{2}^{m}}(-1)^{(\boldsymbol{\alpha}+\boldsymbol{\beta}){ \mathbf d_2}})\\
&-&\frac 12(\sum_{\mathbf f_1\in \Delta_{}} (-1)^{\boldsymbol{\beta} \mathbf f_1})(\sum_{\mathbf f_2\in  \Delta^{c}}(-1)^{(\boldsymbol{\alpha}+\boldsymbol{\beta}){  \mathbf f_2}})\\
&=&2^{2m-1}(1-\delta_{\bf 0,\boldsymbol{\beta}}\delta_{\bf 0,\boldsymbol{\alpha+\beta}})-\frac12 (2^{|A|}+2^{|B|}-2^{|A\cap B|})^{2}+\frac12 T_{\boldsymbol{\beta}}T_{\boldsymbol{\alpha}+\boldsymbol{\beta}},
\end{eqnarray*} 
where $T_{\mathbf {u}}=2^{|A|}\psi(\mathbf {u}|A)+2^{|B|}\psi(\mathbf {u}|B)-2^{|A\cap B|}\psi(\mathbf {u}|A\cap B)$. Then the result follows from Proposition 5.3. 
\end{proof}

\begin{rem}{\rm

By massive computation, weight distributions of the binary subfield codes of these quaternary codes  can be also determined in the case of $D=D_{1}+wD_{2}$, where $D_{1}$ is generated by two maximal elements $A, B\subseteq [m]$ and  $D_{2}$ is generated by two maximal elements $C, F\subseteq [m].$ 
}
\end{rem}

{\color{blue}
\begin{rem}{\rm It is noted that the subfield code $\mathcal{C}^{(2)}_{D^{c}}$ with respect to $\mathcal{C}_{D^{c}}$ in Proposition 5.6 has at  most  eight weights when the sets  $A,B$ have the same size.


 }
\end{rem}

}

The following is an example of Proposition 5.6. 

\begin{example} {\rm Let $m=4$.

(1) If $A=\{1,2\}$ and $B=\{2,3\}$, then the code $\mathcal{C}^{(2)}_{D^{c}}$ in Proposition 5.6   is a five-weight quaternary $[220, 8,104]$ linear code with weight enumerator $$1+8z^{104}+195z^{110}+20z^{112}+16z^{116}+16z^{118}.$$   In fact, the optimal binary linear code has parameter $[220, 8,109] $, according to \cite{G2}.

(2) If $A=\{1,2\}$ and $B=\{3,4\}$, then the code $\mathcal{C}^{(2)}_{D^{c}}$ in Proposition 5.6   is a five-weight quaternary $[207, 8,100]$ linear code with weight enumerator $$1+18z^{100}+108z^{102}+81z^{104}+36z^{108}+12z^{114}.$$   In fact, the optimal binary linear code has parameter $[207, 8,102] $, according to \cite{G2}.

}
\end{example}


{\color{blue}
\section{Code comparisons and concluding remarks }

To show significant advantages of our  codes,  in this section, we present two tables. 

In Table 13, we list recent works on linear codes over finite fields constructed from simplicial complexes for the convenience of the reader.  Compare with known results, some quaternary codes and their subfield codes obtained in this work have ﬂexible and new parameters. To the best of our knowledge, this is the first paper on the construction of linear codes over a non-prime field and  their subfield codes   by using simplicial complexes. 


Table 14 presents optimal quaternary linear codes from two simplicial complexes $\Delta_{A}$ and $\Delta_{B}$ in Theorem 4.4. In Table 14, $*$ indicates that the corresponding  codes are optimal codes, and \lq\lq new\rq\rq~ are also indicated according to the current data base \cite{G}. In fact,  according to the current data base \cite{G} we  find as least 9 new optimal codes (lengths: 12, 48, 56, 60, 192, 224, 240, 248, 252);  even though their parameters are not new but they are inequivalent to currently best-known linear codes.  We confirmed those results by Magma. 
}

 \begin{table}
\caption{Linear codes constructed from simplicial complexes }
\begin{center}
\begin{tabu} to 1\textwidth{|X[0.9,c]|X[0.6,c]| X[1.4,c]|X[3,c]|X[0.8,c]|X[0.8,c]|X[0.9,c]|}
\hline
\tiny{Reference} &\tiny{$q$-Ary}&\tiny{Defining Set}&\tiny{$[n,k,d]$ Code}& \tiny{\#Weight}&\tiny{ Bound}& \tiny{Result}\\
\hline

\multirow{11}*{\tiny{[14]}}&\multirow{11}*{\tiny{binary}}&\tiny{ $\Delta^{*}$}&\tiny{$[2^{|A|}-1,|A|,2^{|A|-1}]$}&\tiny{1}&\tiny{Griesmer} &\multirow{2}*{\tiny{ Lem.7}}\\
\cline{3-6}
&&\multirow{3}*{\tiny{$\Bbb F_{2}^{m}\backslash \Delta$}}&\tiny{$[2^{m}-2^{|A|}, m, 2^{m-1}-2^{|A|-1}]$}&\tiny{2}&\tiny{Griesmer} &\\
\cline{4-7}
&&&\tiny{$[2^{m-1}, m,4]$}&\tiny{}&\tiny{Sphere Packing} &{\tiny{ Ex.10}}\\
\cline{4-7}
&&&\tiny{$[2^{m}-2\sum_{i=1}^{s}2^{|A_{i}-1|}+s-1,m, 2^{m-1}-\sum_{i=1}^{s}2^{|A_{i}|-1}]$}&\tiny{}&\tiny{Griesmer} &{\tiny{ Coro.21}}\\
\cline{3-7}

&&\multirow{2}*{\tiny{ $\Delta^{*}$} }&\tiny{$[2^{|A_{1}|+2^{|A_{2}|}}-2, |A_{1}\cup A_{2}|, 2^{|A_{1}|-1}]$}&\tiny{3}&\tiny{} &\multirow{2}*{\tiny{Lem.26}}\\
\cline{4-6}
&& &\tiny{$[2^{|A_{1}|+2^{|A_{2}|}}-2^{|A_{1}\cap A_{2}|}-1, |A_{1}\cup A_{2}|, 2^{|A_{1}|-1}]$}&\tiny{4}&\tiny{} &\\
\cline{3-7}

&&\multirow{2}*{\tiny{$\Bbb F_{2}^{m}\backslash \Delta$}}&\tiny{$[2^{m}-2^{|A_{1}|}-2^{|A_{2}|}+1, m, 2^{m-1}-2^{|A_{1}|-1}-2^{|A_{2}-1|}]$}&\tiny{3 or 4}&\tiny{Griesmer} &\multirow{2}*{\tiny{ Thm.27}}\\
\cline{4-6}
&&&\tiny{$[2^{m}-2^{|A_{1}|}-2^{|A_{2}|}+2^{|A_{1}\cap A_{2}|}-1,m,2^{m-1}-2^{|A_{1}|-1}-2^{|A_{2}|-1}]$}&\tiny{4 or  5}&\tiny{} &\\
\hline

\multirow{2}*{\tiny{[19]}}&\multirow{2}*{\tiny{binary}}&\multirow{3}*{\tiny{ $\Delta_{A}\backslash \Delta_{B}$}}&\tiny{$[2^{|A|}-2^{|B|},|A|,2^{|A|-1}-2^{|B|-1}]$}&\tiny{2}&\tiny{Griesmer} &{\tiny{ Thm.5}}\\
\cline{4-7}

&&&\tiny{$[2^{|A|}-2^{|B|},|A|,2^{|A|}-2^{|B|}-|A|, 3 \mbox{ or } 4]$}&\tiny{}&\tiny{} &{\tiny{Thm.6}}\\
\cline{3-7}

\hline

\multirow{3}*{\tiny{[23]}}&\multirow{3}*{\tiny{4-ary}}&{\tiny{ $\Bbb F_{4}^{m}\backslash \Delta_{A}+w \Delta_{B}$}}&\tiny{$[(2^{m}-2^{|A|})2^{|B|}, m,3(2^{m+|B|-2}-2^{|A|+|B|-2})]$}&\tiny{5}&\tiny{} &{\tiny{ Thm.3.1}}\\
\cline{3-7}

&&{\tiny{ $\Bbb F_{2}^{m}+w\Delta_{B}$}}&\tiny{$[2^{m+|B|},m,2^{m+|B|-1}]$}&\tiny{2}&\tiny{} &{\tiny{Coro.3.2}}\\
\cline{3-7}

&&{\tiny{ $\Bbb F_{2}^{m}\backslash \Delta_{A}+w\Bbb F_{2}^{m}$}}&\tiny{$[(2^{m}-2^{|A|})2^{m}, m,3(2^{2m-2}-2^{|A|+m-2})]$}&\tiny{2}&\tiny{Griesmer} &{\tiny{Coro.3.3}}\\

\hline

\multirow{5}*{\tiny{[12]}}&\multirow{5}*{\tiny{p-ary}}&\multirow{5}*{\tiny{ $\Bbb F_{p}^{m}\backslash \Delta$}}&\tiny{$[p^{m}-r-1,m, (p-1)p^{m-1}-r]$}&\tiny{2}&\tiny{Griesmer} &{\tiny{ Thm.4.1}}\\
\cline{4-7}

&&&\tiny{$[p^{m}-2r-2,m, (p-1)p^{m-1}-2r-1]$}&\tiny{4}&\tiny{Griesmer} &{\tiny{Thm.4.4}}\\
\cline{4-7}

&&&\tiny{$[p^{m}-3(r+1),m, (p-1)p^{m-1}-3r-2]$}&\tiny{5}&\tiny{Griesmer} &{\tiny{Thm.4.7}}\\
\cline{4-7}

&&&\tiny{$[p^{m}-(r+1)(p-1),m, (p-1)p^{m-1}-(r+1)p+2r+1]$}&\tiny{4}&\tiny{Griesmer} &{\tiny{Thm.4.11}}\\
\cline{4-7}

&&&\tiny{$[p^{m}-(r+1)(p-2),m, (p-1)p^{m-1}-(r+1)p+3r+1]$}&\tiny{5}&\tiny{Griesmer} &{\tiny{Thm.4.14}}\\

\hline

\multirow{10}*{\tiny{{This paper}}}&\multirow{4}*{\tiny{4-ary}}&\multirow{2}*{\tiny{ $ \Delta_{A}+w\Delta_{B}$}}&\tiny{$[2^{|A|+|B|}-1, |A\cup B|, 2^{|A|+|B|-1}]$}&\tiny{2}&\tiny{} &{\tiny{ Prop.4.2}}\\
\cline{4-7}

&&&\tiny{$[(2^{|A|}+2^{|B|}-2^{|A\cap B|})^{2}-1, |A\cup B|]$}&\tiny{$\le 10$}&\tiny{} &{\tiny{Prop.4.7}}\\
\cline{3-7}

&&\multirow{2}*{\tiny{ $\Bbb F_{4}^{m}\backslash \Delta_{A}+w\Delta_{B}$}}&\tiny{$[4^{m}-2^{|A|+|B|}, m, 3\times 2^{2m-2}-3\times 2^{|A|+|B|-2}]$}&\tiny{3}&\tiny{Griesmer} &{\tiny{Thm.4.4}}\\
\cline{4-7}

&&&\tiny{$[4^{m}-(2^{|A|}+2^{|B|}-2^{|A\cap B|})^{2}, m]$}&\tiny{$\le 11$}&\tiny{} &{\tiny{Thm.4.10}}\\
\cline{2-7}

&\multirow{5}*{\tiny{binary}}&\multirow{2}*{{\tiny{ $ \Delta_{A}+w\Delta_{B}$}}}&\tiny{$[2^{|A|+|B|}-1, |A|+|B|,2^{|A|+|B|-1}]$}&\tiny{1}&\tiny{} &{\tiny{Prop.5.1}}\\
\cline{4-7}

&&&\tiny{$[2^{2m}-2^{|A|+|B|}, 2m]$}&\tiny{2}&\tiny{Griesmer} &{\tiny{Thm.5.2}}\\

\cline{3-7}

&&\multirow{3}*{{\tiny{ $ \Delta+w\Delta$}}}&\tiny{$[(2^{|A|}+2^{|B|}-2^{|A\cap B|})^{2}-1,2|A\cup B|]$}&\tiny{$\le 10$}&\tiny{} &{\tiny{Prop.5.3}}\\
\cline{4-7}

&&&\tiny{$[(2^{|A|}+2^{|B|}-2^{|A\cap B|})^{2}-1,(2^{|A|}+2^{|B|}-2^{|A\cap B|})^{2}-1-2|A\cup B|,3]$}&\tiny{}&\tiny{Sphere Packing} &{\tiny{Thm.5.5}}\\
\cline{4-7}
&&&\tiny{$[4^{m}-(2^{|A|}+2^{|B|}-2^{|A\cap B|})^{2},2m]$}&\tiny{$\le 11$}&\tiny{} &{\tiny{Prop.5.6}}\\
\hline
\end{tabu}
\end{center}
\end{table}

 \begin{table}[h]
\caption{Optimal quaternary linear codes from Theorem 4.4}
\begin{center}
\begin{tabu} to 0.8\textwidth{|X[0.5,c]|X[0.8,c]| X[1,c]|X[1,c]|X[0.8,c]|}
\hline
\tiny{$m$} &\tiny{$A$}&\tiny{$B$}& \tiny{$[n,k,d]$ Code}&\tiny{ Remark}\\
\hline
\multirow{3}*{\tiny{2}}&\multirow{3}*{\tiny{$(1,0)$}}&\tiny{$(1,0)$}&\tiny{$[12,2,9]^{*}$}&\tiny{new} \\
\cline{3-5}
&&\tiny{$(0,1)$}&\tiny{$[12,2,9]^{*}$}&\tiny{} \\

\cline{3-5}

&&\tiny{$(1,1)$}&\tiny{$[8,2,6]^{*}$}&\tiny{} \\

\cline{1-5}

\multirow{8}*{\tiny{3}}&\multirow{3}*{\tiny{$(1,0,0)$}}&\tiny{$(1,0,0)$}&\tiny{$[60,3,45]^{*}$}&\tiny{new} \\
\cline{3-5}
&&\tiny{$(0,1,1)$}&\tiny{$[56,3,42]^{*}$}&\tiny{new} \\

\cline{3-5}

&&\tiny{$(1,1,1)$}&\tiny{$[48,3,36]^{*}$}&\tiny{} \\

\cline{2-5}

&\multirow{3}*{\tiny{$(0,1,1)$}}&\tiny{$(0,1,0)$}&\tiny{$[56, 3, 42]^{*}$}&\tiny{new} \\
\cline{3-5}

&&\tiny{$(0,1,1)$}&\tiny{$[48,3,36]^{*}$}&\tiny{new} \\
\cline{3-5}

&&\tiny{$(1,1,1)$}&\tiny{$[32,3,24]^{*}$}&\tiny{} \\
\cline{2-5}

&\multirow{2}*{\tiny{$(1,1,1)$}}&\tiny{$(0,1,0)$}&\tiny{$[48,3,36]^{*}$}&\tiny{} \\
\cline{3-5}

&&\tiny{$(0,1,1)$}&\tiny{$[32,3,24]^{*}$}&\tiny{} \\
\cline{3-5}

\hline

\multirow{11}*{\tiny{4}}&\multirow{3}*{\tiny{$(1,0,0,0)$}}&\tiny{$(1,0,0,0)$}&\tiny{$[252, 4, 189]^{*}$}&\tiny{new} \\
\cline{3-5}
&&\tiny{$(1,1,0,0)$}&\tiny{$[248, 4, 186]^{*}$}&\tiny{new} \\

\cline{3-5}

&&\tiny{$(1,1,1, 0)$}&\tiny{$[240, 4, 180]^{*}$}&\tiny{new} \\

\cline{2-5}

&\multirow{3}*{\tiny{$(0,0,1,1)$}}&\tiny{$(0,0,1,1)$}&\tiny{$[240, 4, 180]^{*}$}&\tiny{new} \\
\cline{3-5}

&&\tiny{$(0,1,1,1)$}&\tiny{$[224, 4, 168]^{*}$}&\tiny{new} \\
\cline{3-5}

&&\tiny{$(1,1,1,1)$}&\tiny{$[192, 4, 144]^{*}$}&\tiny{new} \\
\cline{2-5}

&\multirow{3}*{\tiny{$(1,1,1,0)$}}&\tiny{$(0,1,1,0)$}&\tiny{$[224, 4, 168]^{*}$}&\tiny{new} \\
\cline{3-5}

&&\tiny{$(0,1,0,0)$}&\tiny{$[240, 4, 180]^{*}$}&\tiny{new} \\
\cline{3-5}

&&\tiny{$(1,1,1,0)$}&\tiny{$[192, 4, 144]^{*}$}&\tiny{} \\
\cline{2-5}

&\multirow{2}*{\tiny{$(1,1,1,1)$}}&\tiny{$(0,1,0,0)$}&\tiny{$[224, 4, 168]^{*}$}&\tiny{new} \\
\cline{3-5}

&&\tiny{$(0,0,1,1)$}&\tiny{$[192, 4, 144]^{*}$}&\tiny{new} \\
\cline{3-5}

\hline
\end{tabu}
\end{center}
\end{table}

\

The main contributions of this paper are the following

\begin{itemize}
\item A general explicit relationship between quaternary linear codes and their binary subfield codes in terms of generator matrices  and defining sets (Theorem 3.2).

\item The determination of  weight distributions of four classes of quaternary codes when these simplicial complexes are all generated by a single maximal element  or two maximal elements (Propositions  4.2, 4.7, and Theorems  4.4, 4.10).

\item  The determination of  weight distributions of four classes of the subfield codes of  those quaternary codes (Propositions  5.1, 5.3, 5.6 and Theorem 5.2).

\item  Two infinite families of    optimal  linear codes  meeting the  Griesmer Bound (Theorems 4.4,  5.2) and a class of binary almost optimal linear codes with respect to Sphere Packing Bound (Theorem 5.5).

\item At least 9 new optimal quaternary linear codes (Table 14).
\end{itemize}

Very recently, Hyun {\em et al.} \cite{HKWY}  extended the construction of linear codes to posets.
It would be interesting to find more  optimal quaternary  codes by employing  posets.

On the other hand,    the quaternary  linear code in Example 3.4 is \begin{align*}\mathcal C=\bigg\{(0,0,0,0), (w,0,1+w,1), (1+w,0,1,w), (1,0,w,1+w),\\
(1+w,1+w,w,w), (1,1,1+w,1+w), (w,w,1,1),\\
(1,1+w,1,1+w), (w,1,w,1), (1+w,w,1+w,w),\\
(1+w,1,0,w), (1,w,0,1+w), (w,1+w,0,1),\\
(0,w,w,0), (0,1+w,1+w,0), (0,1,1,0)
\bigg\}.\end{align*}
It is easy to check that its binary  subfield subcode $$\mathcal C| \Bbb F_{2}=\mathcal C\cap \Bbb F_{2}^{4}=\{(0,0,0,0), (0,1,1,0)\}.$$  We just wonder that whether there is a direct way to compute the binary  subfield subcodes of these quaternary codes obtained in this paper.






\noindent{\bf{\large Acknowledgements.}} The authors would like to thank the reviewers and the Associate Editor for their valuable comments and suggestions to improve the quality of this article.

\end{document}